\theoremstyle{plain}%
\newtheorem{theorem}{Theorem}
\newtheorem{lemma}{Lemma}
\newtheorem{proposition}{Proposition}%
\theoremstyle{definition}%
\newtheorem{example}{Example}%
\newtheorem{remark}{Remark}%
\newcommand{\bN}{\mathbb{N}}
\newcommand{\bR}{\mathbb{R}}
\newcommand{\bE}{\mathbb{E}}
\newcommand{\cH}{\mathcal{H}}
\newcommand{\grant}{Engineering and Physical Sciences Research Council [grant number EP/H031936/1], and the Biotechnology and Biological Sciences Research Council [grant numbers BB/L009579/1, BB/M021106/1]}
\begin{document}

\title[Probability of Beneficial Mutation and Crossover]{Analysis and Optimization of Probabilities of Beneficial Mutation and Crossover Recombination in a Hamming Space}


\author*[1]{\fnm{Roman V.} \sur{Belavkin}}\email{r.belavkin@mdx.ac.uk}

\affil*[1]{\orgdiv{Faculty of Science and Technology}, \orgname{Middlesex University}, \orgaddress{\street{The Burroughs}, \city{London}, \postcode{NW4 4BT}, \country{United Kingdom}}}

\abstract{Inspired by Fisher's geometric approach to study beneficial mutations, we analyse probabilities of beneficial mutation and crossover recombination of strings in a general Hamming space with arbitrary finite alphabet.  Mutations and recombinations that reduce the distance to an optimum are considered as beneficial.  Geometric and combinatorial analysis is used to derive closed-form expressions for transition probabilities between spheres around an optimum giving a complete description of Markov evolution of distances from an optimum over multiple generations.  This paves the way for optimization of parameters of mutation and recombination operators.  Here we derive optimality conditions for mutation and recombination radii maximizing the probabilities of mutation and crossover into the optimum.  The analysis highlights important differences between these evolutionary operators.  While mutation can potentially reach any part of the search space, the probability of beneficial mutation decreases with distance to an optimum, and the optimal mutation radius or rate should also decrease resulting in a slow-down of evolution near the optimum.  Crossover recombination, on the other hand, acts in a subspace of the search space defined by the current population of strings.  However, probabilities of beneficial and deleterious crossover are balanced, and their characteristics, such as variance, are translation invariant in a Hamming space, suggesting that recombination may complement mutation and boost the rate of evolution near the optimum.
}

\keywords{Mutation, Crossover, Recombination, Evolutionary Algorithm, Optimal Parameter Control}

\pacs[MSC Classification]{%
05B25, 
68W20, 
68T05, 
60C05, 
68R05, 
68R15 
}

\maketitle

\tableofcontents

\section*{Notation}

\begin{tabular}{lp{.8\textwidth}}
  $\bN$ & the set of natural numbers $\{1,2,3,\ldots\}$\\
  $\bR$ & the field of real numbers\\
  $l\in\bN$ & `length' of tuples or strings\\
  $\alpha\in\bN$ & size of a finite alphabet\\
  $\{1,\ldots,\alpha\}^l$ & the set of all $\alpha^l$ strings of length $l$ over alphabet of size $\alpha$\\
  $\bR^l$ & $l$-dimensional real vector space\\
  $d_E$ & Euclidean metric on $\bR^l$\\
  $d_H$ or $d$ & Hamming metric on $\{1,\ldots,\alpha\}^l$\\
  $\cH_\alpha^l$ & Hamming space --- set $\{1,\ldots,\alpha\}^l$ with the Hamming metric\\
  $x,y,z$ & points in $\cH_\alpha^l$, which are $l$-tuples or strings $x=(x_1,\ldots,x_l)$\\
  $i$, $j$ & positions in strings, such as $x=(x_1,\ldots,x_i,x_j,\ldots,x_l)$\\
  $\top$ &  top or greatest string in $\cH_\alpha^l$ with respect to some preference relation $\lesssim$ on $\cH_\alpha^l$\\
  $k$, $m$, $n$ & values of Hamming distance from $\top$, such as $d(\top,x)=n$\\
  $r$ & \emph{mutation radius} $d(x,y)=r$ in the context of mutation or \emph{recombination radius} in the context of recombination (the number of letters substituted)\\ 
  $h$ & Hamming distance $d(x,y)=h$ between two parent strings in crossover recombination and referred to as \emph{recombination capacity}\\
  $S(x,r)$ & the sphere of radius $r$ around $x$ $\{y:d(x,y)=r\}$\\
  $B(x,r)$ & the closed ball of radius $r$ around $x$ $\{y:d(x,y)\leq r\}$\\
  $\mu$ & mutation rate\\
  $\nu$ & recombination rate\\
  $P\{\cdot\}$ & probability measure\\
  $P(n)$ & probability mass function equal to $P\{d(\top,x)=n\}$\\
  $\bE_P\{n\}$ & the expected value of random variable with respect to measure $P$\\
  $\sigma_P^2\{n\}$ & the variance of random variable with respect to measure $P$\\
\end{tabular}

\section{Introduction}
\label{sec:intro}

Natural evolution can be viewed as a search for an optimal genotype $\top$ (top) in the space $\{1,\ldots,\alpha\}^l$ of all genetic codes of finite alphabet $\{1,\ldots,\alpha\}$ of size $\alpha\in\bN$ and length $l\in\bN$.  Optimality can be defined by some fitness function $f:\{1,\ldots,\alpha\}^l\to\bR$ maximized at $\top$.  Some mathematicians, however, simplified the analysis by replacing fitness $f(x)$ of a genotype with its distance $d(\top,x)$ from $\top$.  For example, Roland Fisher \cite{Fisher30} used Euclidean space $\bR^l$ of $l$ traits to represent species by vectors of $l$ traits and Euclidean distance $d_E(\top,x)$ from an optimum to represent (negative) fitness of $x$.  This simplification allowed him to analyse the probability of beneficial mutations, which in this geometric model meant that mutation of $x$ resulted in an offspring $y$ closer to the optimum: $d_E(\top,y)\leq d_E(\top,x)$.  Fisher's famous result was that beneficial mutations are always more rare than deleterious, and that the only way to equalize their chances is to minimize the mutation radius $d_E(x,y)=r$.  This result follows from the geometry of Euclidean space, where every closed ball $B(\top,n)=\{x\in\bR^l:d_E(\top,x)\leq n\}$ around $\top$ is compact (and has finite volume), while its complement is always unbounded.  Thus, a random mutation of $x$ with $d_E(\top,x)=n$ in all directions by radius $d(x,y)=r$ should more likely end outside the ball $B(\top,n)$ and further from the optimum resulting in a deleterious mutation.

The discovery of DNA and RNA molecules lead to the realization that mutations occur on the level of genetic codes, which are better represented as strings $x=(x_1,\ldots,x_l)$ of length $l\in\bN$ over some finite alphabet $\{1,\ldots,\alpha\}\ni x_i$.  Thus, Fisher's theory of beneficial mutations had to be reconsidered for spaces of strings with alphabets of arbitrary size $\alpha\in\bN$ and variable lengths $l\in\bN$ \cite{Belavkin_etal11:_ecal11,Belavkin11:_itw11,Belavkin11:_dyninf,Belavkin11:_qbic11}.  Furthermore, this geometric approach (i.e. replacing fitness with distance) had limited appeal for practical applications, because distances to an optimum are usually not known.  However, the values $f(x)$ of a fitness function can often provide some information about the distance $d(\top,x)$, and the correlation between fitness and distance has been discussed in the literature, for example, as a measure of problem difficulty \cite{Jones-Forrest95:_fdc,Poli-Galvan12}.  Various notions of monotonicity of fitness landscapes have been defined and proven to hold in a broad class of landscapes if they are continuous at least at the optimum $\top$ (see Theorem~1 in \cite{Belavkin_etal16:_jomb}).  While it is always possible to construct counter examples, fitness landscapes in real-world applications or biology often exhibit some forms of monotonicity around optimum, as was demonstrated in \cite{Belavkin_etal16:_jomb} for 115 complete landscapes of transcription factor bindings \cite{Badis09}.

The generalization of Fisher's geometric model of beneficial mutations to spaces of strings with alphabets of arbitrary size $\alpha\in\bN$ was used to derive several optimal mutation rate control functions \cite{Belavkin_etal11:_ecal11,Belavkin11:_itw11,Belavkin11:_dyninf,Belavkin11:_qbic11}.  They showed that optimal mutation rates should have a decreasing relation to fitness in monotonic fitness landscapes \cite{Belavkin_etal16:_jomb}.  These theoretical predictions lead to the discovery of mutation rates plasticity first in \emph{e. coli} \cite{eids_nature14} and then in other microbes and potentially across all domains of life \cite{Krasovec_etal17:_plos}.  The role of \emph{quorum sensing} in this phenomenon and the relation of population density (as a fitness proxy) and stress to mutation rate \cite{Krasovec18:_isme} suggest a broad scope for applications in many areas including antimicrobial resistance (AMR).  Another potential area of applications of this geometric approach is operational research, where many nature-inspired algorithms~\cite{Yang10} are used to solve complex combinatorial optimization problems.

Evolutionary algorithms, such as genetic algorithms (GA), encode candidate solutions by finite length strings $x=(x_1,\ldots,x_l)$ with letters from a finite alphabet $x_i\in\{1,\ldots,\alpha\}$, and operators of selection, mutation and recombination are applied iteratively to search the space $\{1,\ldots,\alpha\}^l$ \cite{Back93}.  Mutation is a random substitution of some letters in the parent string by any of the $\alpha-1$ letters from the alphabet.  Recombination, on the other hand, is a substitution of some letters in one parent string by the letters from another string (e.g. in the corresponding positions for crossover recombination).  Thus, mutation searches across the entire space $\{1,\ldots,\alpha\}^l$, while recombination can only search in a subspace defined by the current population.  However, recombination of different strings makes some directions of the search more likely (i.e. a kind of pseudo-gradient).

Many heuristics have been identified to improve the search efficiency by finding optimal settings or optimal controls of certain parameters, such as the mutation rate.  In particular, one popular heuristic is to set the mutation rate to $\mu=1/l$, where $l$ is the string length \cite{Ochoa02}.  Other works showed the advantage of using a variable mutation rate that may depend on time or fitness of individuals \cite{Fogarty89,Yanagiya93,Back93,Srinivas-Patnaik94,Braga-Aleksander94,Eiben_etal99,Falco_etal02}.  Many of these works considered only binary codes ($\alpha=2$), because their combinatorics is more tractable.  More recent studies in the theory of evolutionary algorithms have considered arbitrary finite alphabets and self-adjusting mutation rates \cite{Doerr_etal11,Doerr-Pohl12,Doerr_etal18}.

Different heuristic recombination operators have also been employed, such as one-point crossover or a uniform crossover operators, and its important role in maintaining diversity has been recognized \cite{Koetzing11:_crossover,Dang18:_crossover}.  While there are many other types of recombination operators considered in the literature, including mixtures of codes from more than two parents \cite{Moraglio07}, this paper will only consider crossover between two parent strings.  Even in this basic case, however, combinatorial analysis of crossover is more challenging than that for mutation, because it involves more points and more parameters.  Many studies have considered recombination only for binary codes \cite{Srinivas-Patnaik94,Eremeev99,Eremeev08,Eremeev2011,Eremeev-Kovalenko14:_1,Eremeev-Kovalenko14:_2}.

The analysis of evolutionary operators for codes with alphabets of size $\alpha>2$ should have a broader scope of applications not only in the context of DNA or RNA molecules with $\alpha=4$, but also for larger alphabets, such as $\alpha=22$ for the number of canonical amino acids.  In addition, some recombination operators substitute entire substrings of length $r$ (e.g. $r=\lfloor l/2\rceil$ in one-point crossover).  Therefore, recombination can be considered acting on the space of strings $\{1,\ldots,\alpha^r\}^{l/r}$ (i.e. alphabet of size $\alpha^r$).

This work develops a geometric approach to evolution of strings in Hamming spaces extending it from mutation to crossover recombination.  In the next section, we start by reviewing some of the basic properties of a Hamming space and formulate the problem of finding the probability of mutation onto a sphere of a given radius.  Its closed-form solution is given in Theorem~\ref{th:mutation}, which has been previously presented in \cite{Belavkin_etal11:_ecal11,Belavkin11:_itw11,Belavkin11:_dyninf,Belavkin11:_qbic11,Belavkin_etal16:_jomb}.  These results about mutation are included here not only for completeness, but also because they are used in the analysis of recombination in Section~\ref{sec:recombination}, and in particular Lemma~\ref{lm:spheres} for intersection of spheres.  We also derive new formulae for the conditional expected value and variance of Hamming distance after mutation.  In Section~\ref{sec:recombination}, we formulate analogous problem for probability of crossover recombination onto a sphere in a Hamming space, and then derive closed-form solution in Theorem~\ref{th:recombination}.  As with mutation, we also derive new formulae for the expected value and variance of distance after recombination.  We conclude each section by analysing the effects of parameters on probabilities of beneficial mutation and recombination and deriving optimality conditions for mutation and crossover recombination into optimum.  We discuss how our results open new possibilities for a long-term analysis and optimization of mutation and recombination operators.

\section{Mutation}
\label{sec:mutation}

\subsection{Mutation in a Hamming space}

Consider the space $\{1,\ldots,\alpha\}^l$ of strings (or codes) of length $l\in\bN$ and finite alphabet of size $\alpha\in\bN$, and let us equip it with the Hamming metric counting the number of different letters:
\begin{equation}
d(x,y)=|\{i\in\{1,\ldots,l\}:x_i\neq y_i\}|=\sum\limits_{i=1}^l(1-\delta_{x_iy_i})\,,\quad
\delta_{x_iy_i}=
\begin{cases}
  1 & \text{if $x_i=y_i$}\\
  0 & \text{otherwise}
\end{cases}\,.
\label{eq:hamming}
\end{equation}
This metric space is referred to as Hamming space and denoted by $\cH_\alpha^l$.  The definition of Hamming metric~(\ref{eq:hamming}) as the sum of elementary distances $1-\delta_{x_iy_j}$ leads to the following useful result.

\begin{lemma}[Mean and variance of Hamming distance]
  Let $\cH_\alpha^l$ be a Hamming space, and let $P:2^{\cH_\alpha^l\times\cH_\alpha^l}\to[0,1]$ be a joint probability distribution.  Then the expected value and variance of the Hamming distance $d(x,y)$ between pairs of strings $x$, $y\in\cH_\alpha^l$ are
  \begin{align}
    \bE_P\{d(x,y)\}&=l\langle P_i\rangle\,, \label{eq:e-lp}\\
    \sigma_P^2\{d(x,y)\}&=l\langle P_i\rangle+l(l-1)\langle P_{ij}\rangle-(l\langle P_i\rangle)^2\,,\label{eq:var-lp}
  \end{align}
  where
  \begin{align*}
    \langle P_i\rangle&:=\frac{1}{l}\sum_{i=1}^l\bE_P\{1-\delta_{x_iy_i}\}=\mathbb{P}\{x_i\neq y_i\}\,,\\
    \langle P_{ij}\rangle&:=\frac{1}{l(l-1)}\sum_{i=1}^l\sum_{\substack{j=1\\j\neq i}}^l\bE_P\{(1-\delta_{x_iy_i})(1-\delta_{x_jy_j})\}=\mathbb{P}\{x_i\neq y_i \wedge x_j\neq y_j\mid i\neq j\}\,.
  \end{align*}
  are respectively the average probability of non-identical letters at positions $i\in\{1,\ldots,l\}$ and the average joint probability of non-identical letters at two different positions $i$ and $j\neq i$ under the distribution $P$.
  \label{lm:moments}
\end{lemma}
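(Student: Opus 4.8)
The plan is to express the Hamming distance as a sum of indicator-like elementary distances and then apply standard formulas for the mean and variance of a sum of random variables.

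First I would write $d(x,y)=\sum_{i=1}^l X_i$, where $X_i:=1-\delta_{x_iy_i}$ is the elementary distance at position $i$, exactly as in the definition~(\ref{eq:hamming}). Since each $X_i$ takes only the values $0$ and $1$, it is a Bernoulli random variable under $P$ with $\bE_P\{X_i\}=\mathbb{P}\{x_i\neq y_i\}$. By linearity of expectation,
\begin{equation*}
\bE_P\{d(x,y)\}=\sum_{i=1}^l\bE_P\{X_i\}=\sum_{i=1}^l\mathbb{P}\{x_i\neq y_i\}\,,
\end{equation*}
and recognising the right-hand side as $l$ times the average single-position probability $\langle P_i\rangle$ gives~(\ref{eq:e-lp}) immediately.

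For the variance I would start from $\sigma_P^2\{d(x,y)\}=\bE_P\{d(x,y)^2\}-(\bE_P\{d(x,y)\})^2$ and expand the square of the sum:
\begin{equation*}
\bE_P\{d(x,y)^2\}=\sum_{i=1}^l\bE_P\{X_i^2\}+\sum_{i=1}^l\sum_{\substack{j=1\\j\neq i}}^l\bE_P\{X_iX_j\}\,.
\end{equation*}
The key observation here is that $X_i^2=X_i$ because $X_i\in\{0,1\}$, so the diagonal terms collapse to $\sum_i\bE_P\{X_i\}=l\langle P_i\rangle$. The off-diagonal terms are $\sum_{i\neq j}\bE_P\{X_iX_j\}$, and by the definition of $\langle P_{ij}\rangle$ this sum equals $l(l-1)\langle P_{ij}\rangle$. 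Substituting both pieces and then subtracting $(\bE_P\{d(x,y)\})^2=(l\langle P_i\rangle)^2$ from~(\ref{eq:e-lp}) yields~(\ref{eq:var-lp}).

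The computation is essentially routine, so there is no deep obstacle; the only point requiring care is bookkeeping the double sum over ordered pairs $(i,j)$ with $i\neq j$, making sure the normalising factor $1/[l(l-1)]$ in the definition of $\langle P_{ij}\rangle$ matches the number of such ordered pairs. I would also note that no independence assumption is needed: the result holds for an \emph{arbitrary} joint distribution $P$, since the averaged quantities $\langle P_i\rangle$ and $\langle P_{ij}\rangle$ already encode whatever correlations exist between positions. This generality is exactly what makes the lemma applicable later to both mutation and recombination, where the per-position events are in general dependent.
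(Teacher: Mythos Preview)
Your proposal is correct and takes essentially the same approach as the paper: write the Hamming distance as a sum of Bernoulli indicators $X_i=1-\delta_{x_iy_i}$, use linearity for the mean, expand the square for the second moment, and exploit $X_i^2=X_i$ to collapse the diagonal terms. Your added remark that no independence is required is apt and worth keeping.
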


See Appendix~\ref{sec:proof-e-var} for the proof.  The average probabilities $\langle P_i\rangle:=\mathbb{P}\{x_i\neq y_i\}$ and $\langle P_{ij}\rangle:=\mathbb{P}\{x_i\neq y_i \wedge x_j\neq y_j\mid i\neq j\}$ can often be estimated from additional information about distances of strings $x$ and $y$.  For example, assume a joint distribution $P(x,y)$ such that $d(x,y)=n$ for all pairs of strings such that $P(x,y)>0$ (i.e. exactly $n$ letters $x_i\neq y_i$).  Then $\langle P_i\rangle=n/l$ and  $\langle P_{ij}\rangle=(n/l)[(n-1)/(l-1)]$.  Substituting these probabilities into equations~(\ref{eq:e-lp}) and (\ref{eq:var-lp}) gives the desired result $\bE_P\{d(x,y)\}=n$ and $\sigma_P^2\{d(x,y)\}=0$.  More interesting and useful formulae will be obtained in Proposition~\ref{pr:mutation} for mutation and Proposition~\ref{pr:recombination} for crossover.

The geometry of Hamming space is different from that of the Euclidean space $\bR^l$ employed by Fisher \cite{Ahlswede-Katona77}.  In particular, $\cH_\alpha^l$ is finite, has finite diameter $l$, and every point $x\in\cH_\alpha^l$ has $(\alpha-1)^l$ diametric opposite points $\neg x$ (i.e. such that $d(x,\neg x)=l$).  The number of elements in a sphere $S(\top,n):=\{x\in\cH_\alpha^l:d(\top,x)=n\}$ of radius $n$ around $\top$ is
\[
  |S(\top,n)|=(\alpha-1)^n\binom{l}{n}\,.
\]
The number of elements in a closed ball $B(\top,n):=\{x\in\cH_\alpha^l:d(\top,x)\leq n\}$ is the sum $\sum_{k=0}^n|S(\top,k)|$ for all the spheres it contains.   The complement $H_\alpha^l\setminus B(\top,n)$ is the union of all balls $B(\neg\top,l-n)$ around $(\alpha-1)^l$ diametric opposite points $\neg\top$ (see \cite{Ahlswede-Katona77} for details and many other properties of Hamming space).  The number of elements in the complement $H_\alpha^l\setminus B(\top,n)$ is the sum $\sum_{k=l-n}^l|S(\top,k)|$, and it may contain fewer elements than the ball itself, unlike in the Euclidean space.

The `equator' of a Hamming space is the radius equal to $\lfloor l(1-1/\alpha)\rceil$ (here $\lfloor\cdot\rceil$ denotes the nearest integer), which corresponds to the median of the binomial distribution $P(n;l,p)$ with parameter $p=1-1/\alpha$.  Indeed, under a uniform distribution $P_0(x)=\alpha^{-l}$ of strings $x\in\cH_\alpha^l$, the probability $P_0(n)$ of distances $d(\top,x)=n$ from $\top$ (or from any other point) is
\[
P_0(x)=\frac{1}{\alpha^l}\quad\implies\quad P_0(n):=P_0\{x\in S(\top,n)\}=\frac{|S(x,n)|}{\alpha^l}=\frac{(\alpha-1)^n}{\alpha^l}\binom{l}{n}\,,
\]
which can be written as the binomial distribution $P_0(n)=\binom{l}{n}(1-1/\alpha)^n(1/\alpha)^{l-n}$.  Its expected value and variance are
\[
  \bE_{P_0}\{n\}=l(1-1/\alpha)\,,\qquad \sigma^2_{P_0}(n)=l(1-1/\alpha)(1/\alpha)\,,
\]
which can also be obtained using formulae~(\ref{eq:e-lp}) and (\ref{eq:var-lp}) with $\langle P_i\rangle=1-1/\alpha$ and $\langle P_{ij}\rangle=(1-1/\alpha)^2$.  The median is the nearest integer of the above expected value.  For alphabets of size $\alpha>2$ the distribution of distances is skewed towards the end of the range $[0,l]$.

\begin{figure}[!t]
\centering
\setlength{\unitlength}{1mm}
\begin{picture}(50,40)
\linethickness{.075mm}
\put(34,16){\circle*{1}}
\put(34,14){$x$}
\put(34,16){\vector(0,1){7}}
\put(34,23){\circle*{1}}
\put(35,23.5){$y$}
\put(34.5,19){$r$}
\put(34,16){\circle{14}}
\put(5,35){\circle*{1}}
\put(2,35){$\top$}
\put(5,35){\vector(3,-2){28}}
\put(18,22){$n$}
\qbezier(5,5)(40,5)(40,40)
\put(5,35){\vector(4,-1){30}}
\put(22,32){$m$}
\qbezier(5,8)(37,8)(37,40)
\end{picture}
\caption{Mutation of string $x\in S(\top,n)$ into $y\in S(\top,m)$ by substitution of $r=d(x,y)$ letters.  The number of strings in the intersection $S(\top,m)\cap S(x,r)$ defines the geometric probability $P(m\mid n,r)$ (\ref{eq:p-mutation-def}).}
\label{fig:mutation}
\end{figure}
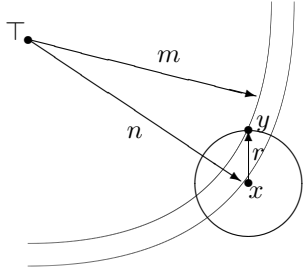

Asexual reproduction of species corresponds to a transformation $x\mapsto y$ of their genetic codes due to a random substitution of $r\in[0,l]$ letters --- a process which we shall generally refer to as \emph{mutation}.  The resulting distance $d(x,y)=r$ from the parent string in this context is referred to as the \emph{mutation radius} shown on Figure~\ref{fig:mutation}.  If distance $d(\top,\cdot)$ from the optimum $\top$ is taken as a model of (negative) fitness, then \emph{beneficial} mutation is a transition from sphere $S(\top,n)\ni x$ onto sphere $S(\top,m)\ni y$ of a smaller radius $m<n$, as shown on Figure~\ref{fig:mutation}.  Mutation is \emph{neutral} if $m=n$, and \emph{deleterious} if $m>n$.  

\begin{example}[Mutation]
Let $\top=(\mathrm{AAAAA})\in\cH_3^5$ and consider string $x=(\mathrm{BBBAA})$ mutating into $y=(\mathrm{BACBA})$ by a substitution of the second, third and fourth letters:
\[
  \xymatrix{&\top=(\mathrm{AAAAA})\ar@{.}[ld]_{n=3}\ar@{.}[rd]^{m=3}&\\
    x=(\mathrm{BBBAA})\ar[rr]^{r=r_++r_0+r_-=3}&&(\mathrm{B}\underset{r_+}{\mathbf{A}}\underset{r_0}{\mathbf{C}}\underset{r_-}{\mathbf{B}}\mathrm{A})=y}
\]
Thus, the mutation radius is $d(x,y)=3$, and the mutation is neutral, because $d(\top,x)=d(\top,y)=3$.  Notice that there was $r_+=1$ beneficial, $r_0=1$ neutral and $r_-=1$ deleterious substitution.  A substitution of three letters in $x=(\mathrm{BBBAA})$ may also result in string $z=(\mathrm{BAACA})$, which is closer to the optimum, $d(\top,z)=2$ (i.e. beneficial mutation).
\label{ex:mutation}
\end{example}

Henceforth we shall denote by $r_+$, $r_-$ and $r_0$ the numbers of beneficial, deleterious and neutral substitutions respectively.  These numbers add up to the mutation radius $r=d(x,y)$, while the difference $r_+-r_-$ is equal to the difference $n-m$ of distances:
\begin{align}
  r_++r_-+r_0=r \label{eq:rm-sum}\,,\\
  r_+-r_-=n-m \label{eq:rm-difference}\,.
\end{align}
If string $x\in S(\top,n)$ mutates into $y\in S(\top,m)$, then the range of the mutation radius is defined by the triangle inequalities:
\[
  |n-m|\leq r\leq n+m
\]
At the extreme values $r=|n-m|$ or $r=n+m$ of the mutation radius, there are no neutral substitutions.  Indeed, for the maximum value $r=n+m$ there are exactly $r_+=n=d(\top,x)$ beneficial and $r_-=m=d(\top,y)$ deleterious substitutions, so that $r_0=r-r_+-r_-=0$. For the minimum value $r=|n-m|$ there are $r_+=\max\{0,n-m\}$ beneficial and $r_-=\max\{0,m-n\}$ deleterious substitutions.  Then equation~(\ref{eq:rm-sum}) and $r=|n-m|=\max\{n-m,m-n\}$ imply $r_0=0$.  In both extreme cases we also have $r_-=r-r_+$ and $r_+=\frac12(r+n-m)$ (if the latter is integer).  Clearly, neutral substitutions are impossible for binary strings ($\alpha=2$).
 
\subsection{Evolutionary dynamics under mutation}

A random mutation $x\mapsto y$ corresponds to some transition probability $P(y\mid x)$, where $x$ is the `parent', and $y$ is its `offspring', and it induces a transformation of distribution $P_t(x)$ of the parent codes into the distribution $P_{t+1}(y)$ of the offspring codes:
\[
P_{t+1}(y)=\sum_{x\in\cH_\alpha^l} P(y\mid x)\,P_t(x)\,.
\]
The corresponding distance distributions $P_t(n):=P_t\{x\in S(\top,n)\}$ and $P_{t+1}(m):=P_{t+1}\{y\in S(\top,m)\}$ are transformed as well:
\[
P_{t+1}(m)=\sum_{n=0}^l P(m\mid n)\,P_t(n)\,.
\]
Here, $P(m\mid n)$ is the transition probability between spheres around $\top$ due to mutation:
\[
P(m\mid n):=P\{y\in S(\top,m)\mid x\in S(\top,n)\}\,.
\]
If $P(m\mid n)$ is time invariant, then the linear operator
\begin{equation}
  M(\cdot)=\sum_{n=0}^lP(m\mid n)\,(\cdot)
  \label{eq:operator-mutation}
\end{equation}
acting on distributions $P_t(n)$ of distances $d(\top,x)=n\in[0,l]$ generates the entire evolution $\{P_t\}_{t\geq0}$ due to mutation as $P_{t+s}=M^sP_t$.  This can be used in simulations to analyse the effects of mutation and adaptation over several generations.

The transition probability $P(m\mid n)$ can be factorized in the following way:
\begin{equation}
  P(m\mid n)=\sum\limits_{r=0}^l P(m\mid n,r)\,\underbrace{P(r\mid n)}_{\text{Mutation}}\,,
  \label{eq:mutation-factorization}
\end{equation}
where $P(r\mid n):=P\{y\in S(x,r)\mid x\in S(\top,n)\}$ is the probability of mutation radius $r\in[0,l]$ conditioned on distance $n=d(\top,x)$.  This probability can be determined from the mutation operator.

\begin{example}[Point mutation]\label{ex:point-mutation}
In a simple point mutation, each letter $x_i$ is substituted independently with probability $\mu\in[0,1]$ called the \emph{mutation rate} (i.e. $\mu$ is fixed for all $i\in\{1,\ldots,l\}$).  Each letter $x_i$ can be substituted to any of the $\alpha-1$ letters $y_i$ with uniform probability $1/(\alpha-1)$.  In this case, the probability that $r\in[0,l]$ letters are substituted has binomial distribution:
\begin{equation}
  P_\mu(r\mid n)=\binom{l}{r}\mu^r(n)[1-\mu(n)]^{l-r}\,.
  \label{eq:simple-mutation}
\end{equation}
Note that the mutation rate $\mu$ may depend on the distance $d(\top,x)=n\in[0,l]$ of the parent string from the optimum.  The mutation operator~(\ref{eq:operator-mutation}) in this case is parameterized by the mutation rate control function $\mu(n)$:
\[
  M_{\mu(n)}(\cdot)=\sum_{n=0}^l\left[\sum_{r=0}^l P(m\mid n,r)P_\mu(r\mid n)\right]\,(\cdot)\,.
\]
\end{example}

\subsection{Geometric probability of mutation onto a sphere}

The probability $P(m\mid n,r)$ in factorization~(\ref{eq:mutation-factorization}) is independent of the mutation operator, and it represents a purely geometric problem depicted on Figure~\ref{fig:mutation}:
\begin{equation}
  P(m\mid n,r):=P\{y\in S(\top,m)\mid x\in S(\top,n),d(x,y)=r\}\,.
  \label{eq:p-mutation-def}
\end{equation}
Fisher considered this geometric probability in Euclidean space $\bR^l$ \cite{Fisher30}.  For a Hamming space $\cH_\alpha^l$, this problem was considered in \cite{Belavkin_etal11:_ecal11,Belavkin11:_itw11,Belavkin11:_dyninf,Belavkin11:_qbic11,Belavkin_etal16:_jomb}, and here we review its solution, because it will be useful for the analysis of recombination in Section~\ref{sec:recombination}.  One can see from Figure~\ref{fig:mutation} that probability~(\ref{eq:p-mutation-def}) depends on the number of strings in the intersection of spheres $S(\top,m)$ and $S(x,r)$.

\begin{lemma}[Intersection of spheres \cite{Belavkin_etal11:_ecal11,Belavkin11:_itw11,Belavkin11:_dyninf,Belavkin11:_qbic11,Belavkin_etal16:_jomb}]
  The number of elements in the intersection $S(\top,m)\cap S(x,r)$ of spheres around points $\top$, $x\in\cH_\alpha^l$ with $d(\top,x)=n$ is
  \begin{equation}
    \Bigl|S(\top,m)\cap S(x,r)\Bigr|_{d(\top,x)=n}
    =\sum\limits_{r_+=0}^r(\alpha-2)^{r_0}\binom{n-r_+}{r_0}(\alpha-1)^{r_-}\binom{l-n}{r_-}\binom{n}{r_+}\,,
    \label{eq:intersection-sphere}
  \end{equation}
  where $r_+\in[0,r]$, and indices $r_-\geq0$, $r_0\geq0$ satisfy the equations:
  \[
    r_-=r_+-(n-m)\,,\qquad r_0=r-2r_++(n-m)\,.
  \]
  Distances $n=d(\top,x)$, $m=d(\top,y)$ and $r=d(x,y)$ must satisfy the triangle inequalities:
  \[
    |n-r|\leq m\leq n+r\,.
  \]
  Otherwise, the number is zero.
  \label{lm:spheres}
\end{lemma}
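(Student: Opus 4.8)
The plan is to count directly the strings $y\in\cH_\alpha^l$ satisfying both $d(\top,y)=m$ and $d(x,y)=r$ by classifying the $l$ coordinate positions according to the behaviour of $x$ relative to $\top$, and then the behaviour of $y$ relative to both. Since the Hamming metric is coordinatewise and the isometry group of $\cH_\alpha^l$ acts transitively on ordered pairs of points at a fixed Hamming distance, the cardinality depends on $\top$ and $x$ only through $n=d(\top,x)$, so I may fix $\top$ and $x$ once and for all. First I would partition $\{1,\ldots,l\}$ into the \emph{difference set} $D=\{i:x_i\neq\top_i\}$, of size $|D|=n$, and the \emph{agreement set} $A=\{i:x_i=\top_i\}$, of size $|A|=l-n$.

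Next I would classify each position $i$ by comparing $y_i$ with $x_i$ and $\top_i$, which recovers exactly the three substitution types introduced before the statement. At a position in $D$ one of three things can happen: $y_i=\top_i$ (a \emph{beneficial} substitution, $1$ admissible letter), $y_i=x_i$ (no substitution), or $y_i\notin\{x_i,\top_i\}$ (a \emph{neutral} substitution, $\alpha-2$ admissible letters). At a position in $A$, where $x_i=\top_i$, either $y_i=\top_i$ (no substitution) or $y_i\neq\top_i$ (a \emph{deleterious} substitution, $\alpha-1$ admissible letters). Writing $r_+$, $r_-$, $r_0$ for the numbers of beneficial, deleterious and neutral substitutions, every such $y$ satisfies $r_++r_-+r_0=r$ and, because each beneficial substitution lowers and each deleterious substitution raises $d(\top,\cdot)$ by one, $r_+-r_-=n-m$; these are precisely~(\ref{eq:rm-sum}) and~(\ref{eq:rm-difference}), and solving them gives $r_-=r_+-(n-m)$ and $r_0=r-2r_++(n-m)$, so the whole triple is determined by $r_+$ alone.

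The counting step then factorizes over the two position classes, because each coordinate falls into exactly one of the categories above, leaving neither overlap nor omission. Within $D$ I would first choose the $r_+$ beneficial positions in $\binom{n}{r_+}$ ways (each forcing $y_i=\top_i$), then the $r_0$ neutral positions among the remaining $n-r_+$ in $\binom{n-r_+}{r_0}$ ways with $\alpha-2$ letters each, the other $D$-positions retaining $y_i=x_i$. Within $A$ I would choose the $r_-$ deleterious positions in $\binom{l-n}{r_-}$ ways with $\alpha-1$ letters each, the rest retaining $y_i=\top_i$. Multiplying these independent choices yields the summand $(\alpha-2)^{r_0}\binom{n-r_+}{r_0}(\alpha-1)^{r_-}\binom{l-n}{r_-}\binom{n}{r_+}$, and summing over $r_+$ gives~(\ref{eq:intersection-sphere}).

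Finally I would address the range of summation and the triangle inequalities. With the convention that a binomial coefficient vanishes once its arguments leave the valid range, the constraints $r_-\geq0$, $r_0\geq0$, $r_-\leq l-n$ and $r_0\leq n-r_+$ are enforced automatically, so $r_+$ may run over $0,\ldots,r$ without changing the sum. Nonemptiness of $S(\top,m)\cap S(x,r)$ is equivalent to the existence of an integer $r_+$ making all four binomials nonzero, which is exactly the triangle inequality $|n-r|\leq m\leq n+r$ for the three points $\top$, $x$, $y$; outside that range every summand vanishes and the count is $0$. I expect the only genuinely delicate point to be the bookkeeping of letter multiplicities --- in particular verifying that a neutral substitution at a difference position has $\alpha-2$ rather than $\alpha-1$ admissible letters, since $y_i$ must avoid \emph{both} $x_i$ and $\top_i$ --- as this is where the alphabet size enters and where an off-by-one error is easiest to make.
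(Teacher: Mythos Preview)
Your proof is correct and follows essentially the same approach as the paper: partition the coordinates by whether $x_i=\top_i$ or $x_i\neq\top_i$, classify each position of $y$ into beneficial, neutral, deleterious or unchanged, and count by choosing locations and admissible letters for each type. Your write-up is in fact more careful than the paper's in explicitly noting the isometry-transitivity, the binomial-vanishing convention that handles the summation range, and the equivalence of nonemptiness with the triangle inequality.
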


See Appendix~\ref{sec:proof-spheres} for the proof.

\begin{remark}
  The summation in equation~(\ref{eq:intersection-sphere}) is shown across all $r_+\in[0,r]$, but it is important to check also that $r_-=r_+-(n-m)\geq0$ and $r_0=r-2r_++(n-m)\geq0$.  The triangle inequalities $|n-m|\leq r\leq n+m$ imply the following bounds $\max\{0,n-m\}\leq r_+\leq\frac12(r+n-m)\leq r$, which can be used for a more efficient implementation.
  \label{rem:range-r}
\end{remark}

Formula~(\ref{eq:intersection-sphere}) for the binary case $\alpha=2$ was previously analysed in \cite{Braga-Aleksander94} (see also \cite{Back93,Eremeev99}).  The solution for arbitrary alphabets was first given in \cite{Belavkin_etal11:_ecal11} (see also \cite{Belavkin11:_itw11,Belavkin11:_dyninf,Belavkin11:_qbic11}).   We now have all information required to find probability~(\ref{eq:p-mutation-def}).

\begin{figure}[t]
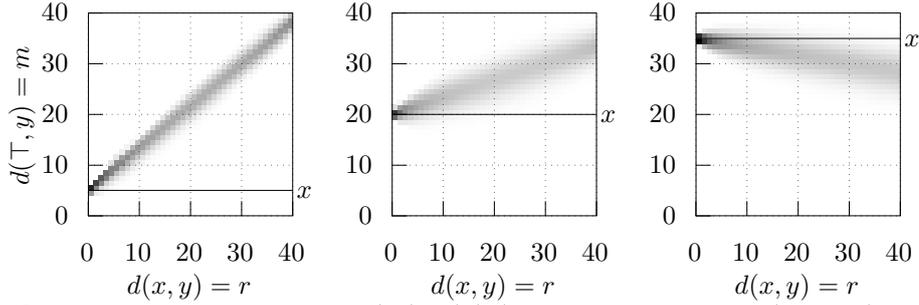

  \centering
  \input{p-mutation-r-5-40}
  \input{p-mutation-r-20-40}
  \input{p-mutation-r-35-40}
  \caption{Dependency of the probability $P(m\mid n,r)$ (\ref{eq:p-mutation}) on the mutation radius $r$ (abscissae) in space $\cH_4^{40}$.  Ordinates show the resulting distance $d(\top,y)=m$ after mutation.  Three charts correspond to three distances $d(\top,x)=n\in\{5,20,35\}$ of the parent string.  Grayscale represents probability $P\in[0,1]$.}
  \label{fig:p-mutation-r}
\end{figure}

\begin{theorem}[Geometric probability of mutation onto a sphere \cite{Belavkin_etal11:_ecal11,Belavkin11:_itw11,Belavkin11:_dyninf,Belavkin11:_qbic11}]
  The probability $P(m\mid n,r)$ that a substitution of $r\in[0,l]$ letters in string $x\in S(\top,n)\subset\cH_\alpha^l$ results in string $y\in S(\top,m)$ is
\begin{equation}
  P(m\mid n,r)=\frac{\sum\limits_{r_+=0}^r(\alpha-2)^{r_0}\binom{n-r_+}{r_0}(\alpha-1)^{r_-}\binom{l-n}{r_-}\binom{n}{r_+}}{(\alpha-1)^r\binom{l}{r}}
  \label{eq:p-mutation}
\end{equation}
with $r_+\in[0,r]$ and the numbers $r_-\geq0$, $r_0\geq0$ defined by the equations
\[
  r_-=r_+-(n-m)\,,\qquad r_0=r-2r_++(n-m)\,.
\]
The probability is zero if the triangle inequalities $|n-r|\leq m\leq n+r$ are not satisfied.
\label{th:mutation}
\end{theorem}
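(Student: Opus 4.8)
The plan is to recognize that $P(m\mid n,r)$ is a purely geometric (combinatorial) probability and then read off the answer directly from Lemma~\ref{lm:spheres}. Given a parent $x$ with $d(\top,x)=n$, a random substitution of exactly $r$ letters sends $x$ to an offspring $y$ drawn uniformly from the sphere $S(x,r)$: choosing which $r$ of the $l$ positions to alter and replacing each altered letter by one of the $\alpha-1$ alternatives produces each of the $(\alpha-1)^r\binom{l}{r}$ points of $S(x,r)$ with equal weight. Thus the conditioning in (\ref{eq:p-mutation-def}) amounts to a uniform law on $S(x,r)$, and the event $\{y\in S(\top,m)\}$ restricted to this support is precisely the intersection $S(\top,m)\cap S(x,r)$.

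First I would write the conditional probability as the ratio of cardinalities
\[
  P(m\mid n,r)=\frac{|S(\top,m)\cap S(x,r)|}{|S(x,r)|}\,.
\]
The denominator is the sphere-size formula $|S(x,r)|=(\alpha-1)^r\binom{l}{r}$ recorded in Section~\ref{sec:mutation}, while the numerator is supplied verbatim by Lemma~\ref{lm:spheres}. Substituting both expressions yields formula~(\ref{eq:p-mutation}) immediately, and the auxiliary quantities $r_-=r_+-(n-m)$ and $r_0=r-2r_++(n-m)$, together with the support condition $|n-r|\le m\le n+r$, are inherited unchanged from the lemma. No further computation is required.

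The one point worth checking is the implicit homogeneity: the answer must depend on $x$ only through $n=d(\top,x)$, never on the particular string chosen on $S(\top,n)$. This is automatic here, since the count in Lemma~\ref{lm:spheres} is already expressed solely in terms of $n$, $m$, $r$ (and the parameters $\alpha$, $l$); equivalently, it follows from the invariance of $\cH_\alpha^l$ under coordinate permutations and under letter relabellings that fix $\top$, a group acting transitively on each sphere $S(\top,n)$ and preserving distances. Consequently the ratio above does not depend on the representative $x$.

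The genuine content of the theorem therefore lies entirely in Lemma~\ref{lm:spheres}; granting that lemma, the proof reduces to a single ratio. Accordingly the only real obstacle---the combinatorial bookkeeping that partitions the $r$ substitutions into $r_+$ beneficial, $r_-$ deleterious and $r_0$ neutral ones, enforces $r_\pm,r_0\ge0$, and sums over the admissible $r_+$---has already been discharged in the proof of that lemma, so here it suffices to combine the intersection count with the sphere cardinality and note that the vanishing of the probability outside the triangle-inequality range matches the vanishing of the intersection count.
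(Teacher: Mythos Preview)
Your proposal is correct and follows essentially the same approach as the paper: write $P(m\mid n,r)$ as the ratio $|S(\top,m)\cap S(x,r)|/|S(x,r)|$, take the numerator from Lemma~\ref{lm:spheres} and the denominator from the sphere-size formula $(\alpha-1)^r\binom{l}{r}$. Your additional remarks on the uniform law over $S(x,r)$ and on the homogeneity of the count (dependence only on $n$) are sound elaborations that the paper leaves implicit (cf.\ the remark following the proof about the uniform-distribution assumption).
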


\begin{proof}
  The probability is given by the proportion of strings in sphere $S(x,r)$ that are also in the sphere $S(\top,m)$:
  \[
  P(m\mid n,r)=\frac{|S(\top,m)\cap S(x,r)|_{d(\top,x)=n}}{|S(x,r)|}\,.
  \]
 The number in the intersection is given by equation~(\ref{eq:intersection-sphere}), and the number of elements in $S(x,r)$ is $(\alpha-1)^r\binom{l}{r}$.
\end{proof}

\begin{remark}
  The proof above makes an implicit assumption about a uniform distribution within subsets (spheres), which is common if no other information about the distribution is given (i.e. the principle of insufficient reason).
\end{remark}

\begin{example}[Binary case $\alpha=2$]
  For binary strings, formula~(\ref{eq:p-mutation}) reduces to:
  \begin{equation}
  P(m\mid n,r)=\frac{\binom{l-n}{r - r_+}\binom{n}{r_+}}{\binom{l}{r}}\,,
  \label{eq:p-mutation-2}
\end{equation}
where $r_+=\frac12(r+n-m)$ must be non-negative integer (otherwise, the probability is zero).  Note that the right-hand-side of equation~(\ref{eq:p-mutation-2}) is the hypergeometric distribution $\mathbb{P}\{X=r_+\}$ of $r_+\in[0,r]$ if it is considered as a random variable.    The above formula is also valid for $\alpha>2$ when the mutation radius $d(x,y)=r$ is minimized ($r=|n-m|$) or maximized ($r=n+m$), because there are no neutral substitutions in these cases.
\label{ex:mutation-2}
\end{example}

Conditional probability~(\ref{eq:p-mutation}) was implemented in a digital computer using Common Lisp programming language, and Figure~\ref{fig:p-mutation-r} illustrates its dependency on parameters $n$ and $r$ in Hamming space $\cH_4^{40}$ ($\alpha=4$, $l=40$).  Three charts correspond to three values of distance $n\in\{5,20,35\}$ of the parent string.  Abscissae show mutation radii $r=d(x,y)$, while ordinates show the resulting distances $m=d(\top,y)$ of offsprings after mutation.  The grayscale represents different values of probability $P(m\mid n,r)$ with white corresponding to $P=0$ and black to $P=1$.  One can see that the mutation radius has different effects on the probability depending on whether the parent's distance $n=d(\top,x)$ is less or greater than the `equator' $l(1-1/\alpha)$: for $n<l(1-1/\alpha)$ higher mutation radius makes larger distances $m=d(\top,y)$ more likely, but the effect reverses for $n>l(1-1/\alpha)$.  This corresponds to the fact that in Hamming space closed ball with radius $n>l(1-1/\alpha)$ is larger than its complement.  One may also notice from Figure~\ref{fig:p-mutation-r} that the expected distance $\bE_P\{m\mid n,r\}$ may have a simple relation to the mutation radius $r=d(x,y)$.  This relation is given below.

\begin{proposition}
  The expected value and variance of the conditional probability distribution~(\ref{eq:p-mutation}) for Hamming distance $m=d(\top,y)$ of string $y\in S(x,r)$ obtained by a substitution of $d(x,y)=r$ letters in string $x\in S(\top,n)\subset\cH_\alpha^l$ are
  \begin{align}
    \bE_P\{m\mid n,r\}&=n+\left(1-\frac{n}{l(1-1/\alpha)}\right)r\,,\label{eq:e-mutation}\\
    \sigma_P^2\{m\mid n,r\}&=n\left[\frac{\alpha-2}{(\alpha-1)^2} + \frac{(l-n)(l-r)}{(1-1/\alpha)^2l(l-1)}\right]\frac{r}{l}\,. \label{eq:var-mutation}
  \end{align}
  \label{pr:mutation}
\end{proposition}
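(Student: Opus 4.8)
The plan is to apply Lemma~\ref{lm:moments} with the first argument fixed at the optimum $\top$ and the second argument $y$ drawn uniformly from the sphere $S(x,r)$, so that $m=d(\top,y)$ plays the role of the Hamming distance in that lemma. Then the whole problem reduces to evaluating the two averaged probabilities $\langle P_i\rangle=\mathbb{P}\{y_i\neq\top_i\}$ and $\langle P_{ij}\rangle=\mathbb{P}\{y_i\neq\top_i\wedge y_j\neq\top_j\mid i\neq j\}$ and substituting them into (\ref{eq:e-lp}) and (\ref{eq:var-lp}). To make these computable I would first describe the uniform distribution on $S(x,r)$ concretely: a set $R$ of $r$ positions is chosen uniformly among the $\binom{l}{r}$ subsets, and at each $i\in R$ the letter $x_i$ is replaced by one of the $\alpha-1$ other letters, uniformly and independently. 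The key structural facts are that each position lies in $R$ with probability $r/l$, that a pair $i\neq j$ lies jointly in $R$ with the hypergeometric probability $r(r-1)/[l(l-1)]$, and that, conditioned on the membership pattern of $R$, the substituted letters are independent across positions. Here a representative $x$ may be fixed with exactly $n$ positions where $x_i\neq\top_i$, since by symmetry the answer depends only on $n$.

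First I would compute $\langle P_i\rangle$ by splitting the $l$ positions into the $n$ positions with $x_i\neq\top_i$ and the $l-n$ positions with $x_i=\top_i$. At a position with $x_i=\top_i$ one gets $y_i\neq\top_i$ exactly when $i\in R$, giving probability $r/l$; at a position with $x_i\neq\top_i$ one keeps $y_i\neq\top_i$ unless $i\in R$ and the new letter happens to equal $\top_i$ (one of the $\alpha-1$ candidates), giving $1-\tfrac{r}{l(\alpha-1)}$. Averaging these over all $l$ positions and multiplying by $l$ as in (\ref{eq:e-lp}) yields $\bE_P\{m\mid n,r\}=n+r\bigl[1-\tfrac{n\alpha}{l(\alpha-1)}\bigr]$, which is exactly (\ref{eq:e-mutation}) after writing $\alpha/(\alpha-1)=1/(1-1/\alpha)$.

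The substantial step is $\langle P_{ij}\rangle$. Here I would evaluate $\mathbb{P}\{y_i\neq\top_i\wedge y_j\neq\top_j\}$ by the law of total probability over the four membership patterns of $(i,j)$ in $R$, whose probabilities are the hypergeometric weights $\tfrac{r(r-1)}{l(l-1)}$, $\tfrac{r(l-r)}{l(l-1)}$ (twice), and $\tfrac{(l-r)(l-r-1)}{l(l-1)}$; conditional independence then turns each pattern's contribution into a product of the single-position probabilities $\tfrac{\alpha-2}{\alpha-1}$ or $1$ (for a $x_i\neq\top_i$ position, in or out of $R$) and $1$ or $0$ (for a $x_i=\top_i$ position, in or out of $R$). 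Summing over the three types of ordered pairs --- both differing ($n(n-1)$ pairs), both agreeing ($(l-n)(l-n-1)$ pairs), and mixed ($2n(l-n)$ pairs) --- and dividing by $l(l-1)$ gives $\langle P_{ij}\rangle$ in closed form.

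Finally I would substitute $\langle P_i\rangle$ and $\langle P_{ij}\rangle$ into the variance formula (\ref{eq:var-lp}), $\sigma_P^2\{m\}=l\langle P_i\rangle+l(l-1)\langle P_{ij}\rangle-(l\langle P_i\rangle)^2$. The main obstacle is purely algebraic: the three pieces individually are bulky polynomials in $n,r,l,\alpha$, and arriving at the compact product form (\ref{eq:var-mutation}) requires the large cancellation between $l(l-1)\langle P_{ij}\rangle$ and $(l\langle P_i\rangle)^2$ to collapse into the single factor $r/l$ and the bracketed sum of the $\tfrac{\alpha-2}{(\alpha-1)^2}$ term and the $\tfrac{(l-n)(l-r)}{(1-1/\alpha)^2 l(l-1)}$ term. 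I would organise this by keeping the contributions grouped by pair-type, and as a sanity check I would verify the binary case $\alpha=2$, where the first bracketed term vanishes and the result must reduce to $4$ times the hypergeometric variance $\tfrac{rn(l-n)(l-r)}{l^2(l-1)}$ of the count $r_+=\tfrac12(r+n-m)$ identified in Example~\ref{ex:mutation-2}.
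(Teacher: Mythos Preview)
Your proposal is correct and follows essentially the same route as the paper: both apply Lemma~\ref{lm:moments} by computing $\langle P_i\rangle$ and $\langle P_{ij}\rangle$ from the explicit description of uniform sampling on $S(x,r)$ (random $r$-subset $R$ plus independent uniform substitutions), casing on whether positions agree with $\top$ and whether they lie in $R$, and then substitute into (\ref{eq:e-lp})--(\ref{eq:var-lp}). The only cosmetic difference is the bookkeeping for $\langle P_{ij}\rangle$: the paper organises it as a $3\times3$ sequential case split on the ``event type'' at $i$ and then at $j$ with parameters shifted to $n-1,r-1,l-1$, whereas you condition first on the four $R$-membership patterns with hypergeometric weights and then sum over the three pair types---both lead to the same polynomial and the same algebraic collapse into (\ref{eq:var-mutation}).
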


The proof uses formulae~(\ref{eq:e-lp}) and (\ref{eq:var-lp}) from Lemma~\ref{lm:moments}, where the average probabilities $\langle P_i\rangle$ and $\langle P_{ij}\rangle$ are defined as functions of parameters $n=d(\top,x)$ and $r=d(x,y)$.  See Appendix~\ref{sec:proof-e-var-mutation} for details.

Formula~(\ref{eq:e-mutation}) confirms the linear dependency of posterior expectation of distance $m=d(\top,y)$ on the mutation radius as can be seen on Figure~\ref{fig:p-mutation-r}.  The slope of this dependency is $1-n/l(1-1/\alpha)$, which is positive if $d(\top,x)=n < l(1-1/\alpha)$ (i.e. if $x$ is closer to $\top$ than the `equator') and negative if $n > l(1-1/\alpha)$.  One can see also from equation~(\ref{eq:e-mutation}) that at distance $n=l(1-1/\alpha)$ the expected value of $m=d(\top,y)$ becomes independent of the mutation radius $r=d(x,y)$ and is equal to $l(1-1/\alpha)$.  Also, the value $r=l(1-1/\alpha)$ of the mutation radius makes the expectation of distance equal to $l(1-1/\alpha)$ and independent of the parent's distance $n=d(\top,x)$, which corresponds to the uniform distribution $P_0(x)=\alpha^{-l}$ of strings.

Differentiation of equation~(\ref{eq:var-mutation}) over $r$ and setting the derivative to zero
\[
  \frac{\partial}{\partial r}\sigma^2\{m\mid n,\hat r\} = \frac{n}{l}\left[\frac{(\alpha-2)}{(\alpha-1)^2} + \frac{(l-2\hat r)(l-n)}{l(1-1/\alpha)^2(l-1)}\right]=0
\]
gives the saddle point, which is the maximum, because the second derivative is negative (observe that the radius appears only in $(l-2r)$ with the minus sign).  The mutation radius maximizing the variance of offspring's distance $d(\top,y)=m$ is
\[
  \hat r(n) = \frac{l}{2}\left[1 + \frac{(\alpha-2)(l-1)}{\alpha^2(l-n)}\right]\,.
\]
One can see that generally, unless $\alpha=2$, the maximizing mutation radius depends on the distance $d(\top,x)=n$ of the parent string, and it is not equal to the Hamming space equator $l(1-1/\alpha)=\frac{l}{2}\left[1+(\alpha-2)/\alpha\right]$.

\subsection{Maximization of  probability of beneficial mutation}

The closed-form expression~(\ref{eq:p-mutation}) combined with probability $P(r\mid n)$ of the mutation radius gives complete solution to transition probability~(\ref{eq:mutation-factorization}) between spheres around the optimum under mutation.  This makes it possible to study Markov evolution of distance distributions and solve related optimization problems.  Here we consider two simple problems that have exact solutions.

\begin{proposition}[Minimization of the expected distance after mutation]
  The expected value~(\ref{eq:e-mutation}) of Hamming distance $d(\top,y)=m$ after mutation of $x\in S(\top,n)\subset\cH_\alpha^l$ into $y\in S(\top,m)$ is minimized if the mutation radius $d(x,y)=r\in[0,l]$ has the values $r=0$ for $n<l(1-1/\alpha)$, $r=l$ for $n>l(1-1/\alpha)$, and any value for $n=l(1-1/\alpha)$.
  \label{pr:mutation-step}
\end{proposition}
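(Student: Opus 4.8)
The plan is to treat the right-hand side of~(\ref{eq:e-mutation}), supplied by Proposition~\ref{pr:mutation}, as a function of the single real variable $r\in[0,l]$ with $n$, $\alpha$ and $l$ held fixed, and to minimize it over that interval. The decisive observation is that this function is \emph{affine} in $r$:
\[
  g(r):=\bE_P\{m\mid n,r\}=n+s\,r\,,\qquad s:=1-\frac{n}{l(1-1/\alpha)}\,,
\]
so its minimum over the compact interval $[0,l]$ is attained at an endpoint whenever the slope $s$ is nonzero, and the sign of $s$ alone decides which endpoint. No calculus is needed beyond this elementary fact.

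First I would determine the sign of the slope $s$. Since $l(1-1/\alpha)>0$, we have $s>0$ exactly when $n<l(1-1/\alpha)$, $s<0$ exactly when $n>l(1-1/\alpha)$, and $s=0$ exactly when $n=l(1-1/\alpha)$, i.e. when $x$ lies on the `equator' of the Hamming space. The three regimes in the statement correspond precisely to these three sign cases.

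The conclusion then follows immediately: an affine function on $[0,l]$ with positive slope is minimized at the left endpoint $r=0$, one with negative slope is minimized at the right endpoint $r=l$, and a function with $s=0$ is constant (equal to $n=l(1-1/\alpha)$) and therefore attains its minimum at \emph{every} $r\in[0,l]$. This yields $r=0$ for $n<l(1-1/\alpha)$, $r=l$ for $n>l(1-1/\alpha)$, and any $r$ for $n=l(1-1/\alpha)$, as claimed. There is essentially no obstacle, since the entire content of the statement is packaged into the linearity of~(\ref{eq:e-mutation}); the only point requiring care is the admissible domain of the mutation radius. I would confirm that the full interval $[0,l]$ is feasible — any number of substituted letters from $0$ to $l$ is permitted — so that both endpoints are genuinely attainable, and that~(\ref{eq:e-mutation}) holds throughout this range rather than on a proper subinterval.
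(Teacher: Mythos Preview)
Your proof is correct and follows the same approach as the paper: both rely on the affine dependence of~(\ref{eq:e-mutation}) on $r$, with the sign of the slope $1-n/[l(1-1/\alpha)]$ determining the minimizing endpoint. The paper's proof is simply a one-line reference to equation~(\ref{eq:e-mutation}), whereas you have spelled out the endpoint analysis explicitly.
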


\begin{proof}
  The minimization $\bE_P\{m\mid n,r\}<n$ over the mutation radius $r\in[0,l]$ follows from equation~(\ref{eq:e-mutation}).
\end{proof}

It follows that for the simple point mutation operator (Example~\ref{ex:point-mutation}) the mutation rate minimizing the expected distance after one mutation is
\[
  \hat\mu(n)=
  \begin{cases}
    0&\mbox{ if $n<l(1-1/\alpha)$}\\
    1-1/\alpha&\mbox{ if $n=l(1-1/\alpha)$}\\
    1&\mbox{ if $n>l(1-1/\alpha)$}
  \end{cases}\,.
\]
Although the offspring's expected distance $\bE\{m\mid n,r\}$ is independent of the mutation radius at distance $n=l(1-1/\alpha)$ (so that $\mu$ can have any value), we use the value $1-1/\alpha$, because several known mutation rate functions $\mu(n)$ are monotonic and pass through this value (e.g. the linear function $\mu(n)=n/l$, derived below, passes through $1-1/\alpha$ at $n=l(1-1/\alpha)$).

Note `step' mutation rate function $\hat\mu(n)$ above has the following problem.  Notice that for binary strings ($\alpha=2$) at distances $d(\top,x)=n>l/2$ the mutation rate $\mu=1$ changes the distances to $d(\top,y)=m=l-n<l/2$.  Thus, all bitstrings will be at distance $d(\top,y)\leq l/2$ after just one generation.  A similar effect will occur for strings with alphabets $\alpha>2$, but it may take several generations because of the possibility of neutral substitutions.  Therefore, after multiple generations the distribution of distances $P_{t+s}(m)=M_{\mu(n)}^s P_t(n)$, $s>1$, will not change due to mutation with rate $\mu=0$.  It is clear that the step mutation rate function is not optimal for evolution over multiple generations.  Computer simulations show that a sigmoid type mutation rate functions achieve optimality for multiple generations \cite{Belavkin11:_dyninf}, but analytic derivation of such results is not straightforward.  Another approach is to maximize the probability of mutation directly into the optimum: $x\mapsto y=\top$.

\begin{proposition}[Mutation into the optimum]
  The probability $P(m=0\mid n,r)$ that string $x\in S(\top,n)\subset\cH_\alpha^l$ mutates into string $y=\top$ is zero unless $d(x,y)=r(n)=n$, in which case the probability is
  \[
    P(m=0\mid n, r = n) = \frac{1}{|S(x,n)|}=\frac{1}{(\alpha-1)^n\binom{l}{n}}\,.
  \]
  \label{pr:mutation-optimum}
\end{proposition}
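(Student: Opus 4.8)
The plan is to treat the claim as a direct consequence of the geometric definition~(\ref{eq:p-mutation-def}) of the mutation probability, bypassing the full summation in Theorem~\ref{th:mutation}. Recall that $P(m\mid n,r)$ is the probability that a uniformly chosen offspring $y\in S(x,r)$ lands in $S(\top,m)$. Setting $m=0$ reduces the target sphere to the single point $S(\top,0)=\{\top\}$, so the quantity to evaluate is simply $P\{y=\top\mid y\in S(x,r)\}$.

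First I would locate $\top$ relative to $x$. Since $d(\top,x)=n$ by hypothesis, the optimum $\top$ belongs to the sphere $S(x,r)$ if and only if $r=n$. Equivalently, the triangle inequality $|n-r|\leq m$ from Theorem~\ref{th:mutation} collapses, at $m=0$, to $|n-r|\leq 0$, which forces $r=n$. Hence for every $r\neq n$ the event $y=\top$ is impossible and the probability vanishes.

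For the remaining case $r=n$, I would invoke the uniform-distribution convention (the principle of insufficient reason noted in the Remark after Theorem~\ref{th:mutation}): conditioned on $y\in S(x,n)$, each of the $|S(x,n)|=(\alpha-1)^n\binom{l}{n}$ strings is equally likely, and exactly one of them is $\top$. Therefore the probability of hitting $\top$ is $1/|S(x,n)|$, which is the stated value.

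As a consistency check I would also specialize formula~(\ref{eq:p-mutation}) directly: with $m=0$ and $r=n$ the auxiliary indices become $r_-=r_+-n$ and $r_0=2n-2r_+$, and the constraints $r_-\geq0$ together with $r_+\leq r=n$ force $r_+=n$, hence $r_-=r_0=0$. The single surviving summand is $\binom{0}{0}\binom{l-n}{0}\binom{n}{n}=1$, and dividing by the denominator $(\alpha-1)^n\binom{l}{n}$ recovers the same answer. There is no genuine obstacle here; the only point requiring care is confirming that the triangle inequality leaves $r=n$ as the unique admissible radius, after which both the elementary counting argument and the specialization of Theorem~\ref{th:mutation} agree.
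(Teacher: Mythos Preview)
Your proposal is correct and matches the paper's approach: the paper's one-line proof is precisely your consistency check, namely substituting $r_+=n$, $r_-=r_0=0$ into equation~(\ref{eq:p-mutation}). Your primary argument via the triangle inequality and uniform sampling on $S(x,n)$ is simply the ``obvious'' reasoning the paper alludes to before pointing to the formal substitution, so there is no substantive difference.
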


\begin{proof}
  This obvious result can be obtained formally by substituting the values $r_+=n$, $r_-=r_0=0$ into equation~(\ref{eq:p-mutation}).
\end{proof}

Substitution of the above probability, which is reciprocal of the number of elements in $S(x,n)$, into equation~(\ref{eq:mutation-factorization}) and using binomial distribution~(\ref{eq:simple-mutation}) for the mutation radius gives the following formula for the probability of transition into optimum under simple point mutation (Example~\ref{ex:point-mutation}):
\[
  P_\mu(m=0\mid n) = (\alpha-1)^{-n}\mu^n(n)[1-\mu(n)]^{l-n}\,.
\]
The optimal mutation rate $\hat\mu$ maximizing this probability is
\begin{align}
  \frac{\partial}{\partial\mu}P_\mu&=(\alpha-1)^{-n}\hat\mu^n[1-\hat\mu]^{l-n}
  \underbrace{\left(\frac{n}{\hat\mu} - \frac{l-n}{1-\hat\mu}\right)}_{=0}
  =0\qquad\implies\qquad
  \hat\mu(n)=\frac{n}{l}\,,\label{eq:linear-mutation-rate}\\
  \frac{\partial^2}{\partial\mu^2}P_\mu&=(\alpha-1)^{-n}\hat\mu^n[1-\hat\mu]^{l-n}
   \left[\underbrace{\left(\frac{n}{\hat\mu} - \frac{l-n}{1-\hat\mu}\right)^2}_{=0} - \frac{l - 2n/\hat\mu + n/\hat\mu^2}{(1-\hat\mu)^2}\right] \nonumber\\
                                   &=(\alpha-1)^{-n}\hat\mu^n[1-\hat\mu]^{l-n}
                                     \left[ \frac{l(1-l/n)}{(1-n/l)^2}\right]\leq 0\,.\nonumber
\end{align}
Indeed, one can see that for $\hat\mu=n/l$ the first derivative is zero, and the second derivative is negative, because $1-l/n\leq0$.  Observe also that as the diameter of Hamming space increases $l\to\infty$ the optimal mutation rate $\hat\mu(n)=n/l$ converges to zero for each $n\in\bN$.

These examples show that minimization of the mutation radius or mutation rate may not always be optimal in Hamming space.  The heuristic value $\mu=1/l$ \cite{Ochoa02} is optimal only at distance $d(\top,x)=1$ with respect to the criterion of Proposition~\ref{pr:mutation-optimum}.  Although the exact shapes of the mutation rate functions optimal subject to additional constraints may differ (e.g. see examples in \cite{Belavkin11:_itw11,Belavkin11:_dyninf,Belavkin11:_qbic11,Belavkin_etal16:_jomb} for constraints on the number of generations or information constraints), these functions have the common property of monotonically increasing optimal mutation rates $\mu(n)$ with distance $d(\top,\cdot)=n$ from the optimum.

\section{Recombination}
\label{sec:recombination}

\subsection{Crossover recombination in a Hamming space}

Recombination is a substitution of some letters in string $x\in\cH_\alpha^l$ by letters from another string $y\in\cH_\alpha^l$.  In this paper, we shall only consider \emph{crossover} recombination when letters $x_i$ are substituted by letters $y_i$ with the same index $i\in\{1,\ldots,l\}$.  This corresponds to Hamming metric~(\ref{eq:hamming}) accounting for differences of letters only at the same indices.

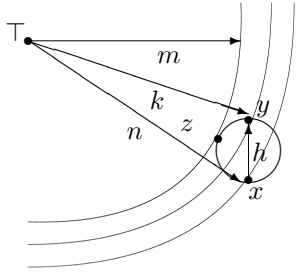
\begin{figure}[!t]
\centering
\setlength{\unitlength}{1mm}
\begin{picture}(50,40)
\linethickness{.075mm}
\put(34,16.5){\circle*{1}}
\put(34,14){$x$}
\put(34,16){\vector(0,1){8}}
\put(34,24.5){\circle*{1}}
\put(35,25.5){$y$}
\put(34.5,19){$h$}
\put(25,23){$z$}
\put(30,22){\circle*{1}}
\put(34,20.5){\circle{9.1}}
\put(5,35){\circle*{1}}
\put(2,35){$\top$}
\put(5,35){\vector(3,-2){28}}
\put(18,22){$n$}
\qbezier(5,5)(40,5)(40,40)
\put(5,35){\vector(3,-1){29}}
\put(21,26){$k$}
\qbezier(5,8)(37,8)(37,40)
\put(5,35){\vector(3,0){28}}
\put(22,32){$m$}
\qbezier(5,11)(34,10)(33,40)
\end{picture}
\caption{Recombination of string $x\in S(\top,n)$ with $y\in S(\top,k)$ into string $z\in S(\top,m)$ by crossover of $r\in[0,l]$ letters.  The number of strings in the intersection $S(\top,m)\cap I(x,y,r)$ defines the geometric probability $P(m\mid n,k,h,r)$ (\ref{eq:p-recombination-def}).}
\label{fig:recombination}
\end{figure}

Because in recombination we have to consider two parent strings $x$, $y$ and their distances from $\top$, we have a triangle $(x,y,\top)$ and three distances:
\begin{align*}
  n &= d(\top,x)\\
  k &= d(\top,y)\\
  h &= d(x,y)
\end{align*}
as shown on Figure~\ref{fig:recombination}.  The number $r\in[0,l]$ of letters that are exchanged during crossover between $x$ and $y$ will be referred to as the \emph{recombination radius}, and it can be larger than the distance $h=d(x,y)$ between two strings, because crossover may recombine identical letters.  After crossover of string $x$ with $y$ (the parents), the resulting new string $z$ (the offspring) is a mixture of $l-r$ letters from $x$ and $r$ letters from $y$, and it is created `between' its parents in the sense of Hamming distance: $d(z,y)\leq d(x,y)$ and $d(x,z)\leq d(x,y)$.  It is convenient to denote the result of recombination as $z = (1-\lambda)x \oplus \lambda y$, where $\lambda=r/l$, by analogy with convex combination in a real space, although this is only a notational convenience (hence the use of symbol $\oplus$ instead of $+$).

Recombination of $x$ with $y$ into $z$ using $r$ letters has a dual recombination $z'$, which can be formed as a substitution of the remaining $l-r$ letters from $y$ into $x$.  Thus, the dual recombination is $z'=\lambda x \oplus (1-\lambda)y$ in our `convex combination' notation.  Contrary to a real space, a mixture $z=(1-\lambda)x\oplus\lambda y$ in a Hamming space is not unique, because it depends on positions at which $r$ letters are substituted.  The totality of all possible recombinations of $r$ letters between two strings has been called a \emph{recombination potential} \cite{Eremeev08}:
\[
I(x,y,r):=\{z=(1-\lambda)x \oplus \lambda y: \lambda=r/l\}\,.
\]
The number of strings in $I(x,y,r)$ is
\[
  |I(x,y,r)| = \binom{l}{r}\,.
\]
Note that some strings in $I(x,y,r)$ may appear more than once, because different recombinations may result in the same offspring.  This makes $I(x,y,r)$ a multiset.  Also, because substitution of $r$ letters from $y$ into $x$ is the same as substitution of the remaining $l-r$ letters from $x$ into $y$, we have the following equality:
\[
I(x,y,r)=I(y,x,l-r)\,.
\]
Exchanging different letters $x_i\neq y_i$ at the same indices cannot make them equal, so that the Hamming distance between two parent strings $x$, $y$ and between their recombinations $z=(1-\lambda)x\oplus\lambda y$ and $z'=(1-\lambda)y\oplus\lambda x$ remain the same: $d(x,y)=d(z,z')$.  This implies that recombination potential $I(x,y,r)$ has a round shape: its elements belong to a sphere of diameter $h=d(x,y)$ as shown on Figure~\ref{fig:recombination}.

If distance $d(\top,\cdot)$ from the optimum $\top$ is taken as a model of (negative) fitness, then recombination is called \emph{beneficial} for parent $x\in S(\top,n)$ if it corresponds to a transition onto sphere $S(\top,m)\ni z$ of a smaller radius $m<n$, as shown on Figure~\ref{fig:recombination}.  Recombination is \emph{neutral} if $m=n$, and \emph{deleterious} if $m>n$.  Note that a recombination can be beneficial for $x\in S(\top,n)$, but not necessarily for $y\in S(\top,k)$.

\begin{example}[Crossover recombination]
Let $\top=(\mathrm{AAAAA})\in\cH_3^5$ and consider strings $x=(\mathrm{BBBAA})$ and $y=(\mathrm{BACBA})$ recombining into string $z=(\mathrm{BABBA})$, which can be obtained by a substitution of the second, fourth and the last letters in $x$ by the corresponding letters from $y$:
\[
\xymatrix{&\top=(\mathrm{AAAAA})\ar@{.}[ld]_{n=3}\ar@{.}[ldd]^{k=3}\ar@{.}[rd]^{m=3}&\\
  x=(\mathrm{BBBAA})\ar[rr]^{r=r_++r_0+r_-=3}&&(\mathrm{B}\underset{r_+}{\mathbf{A}}\mathrm{B}\underset{r_-}{\mathbf{B}}\underset{r_0}{\mathbf{A}})=z\\
  y=(\mathrm{B}\underset{h_+}{\mathbf{A}}\underset{h_0}{\mathrm{C}}\underset{h_-}{\mathbf{B}}\mathbf{A})\ar@{.}[u]<5ex>^{h=h_++h_0+h_-=3}\ar@<.2ex>[u]\ar@<-4ex>[u]\ar@<-6ex>[u]&&}
\]
The dual offspring is $z'=(\mathrm{BBCAA})$.  Thus, the recombination radius is $r=3$, and the recombination is neutral, because $d(\top,x)=d(\top,z)=3$.  Note that although the recombination radius was $r=3$, only two of these substitutions occurred out of $d(x,y)=h=3$ different letters; the third substitution was made for identical letters.  A substitution of the second, third and the last letters in $x=(\mathrm{BBBAA})$ by the corresponding letters from $y=(\mathrm{BACBA})$ results in string $v=(\mathrm{BACAA})$, which is closer to the optimum, $d(\top,v)=2$.
\label{ex:recombination}
\end{example}

As with mutation, we denote by $r_+$, $r_-$ and $r_0$ the numbers of beneficial, deleterious and neutral substitutions respectively.  They satisfy the same equations as~(\ref{eq:rm-sum}) and (\ref{eq:rm-difference}):
\begin{align}
  &r_++r_-+r_0=r\,, \label{eq:rr-sum}\\
  &r_+-r_-=n-m\,. \label{eq:rr-difference}
\end{align}
Here, $n=d(\top,x)$ and $m=d(\top,z)$.  The numbers $r_+$ and $r_-$ count beneficial or deleterious substitutions for parent $x$, but not necessarily for parent $y$.  These substitutions can only occur for $h=d(x,y)$ different letters.  Therefore, $r_+\leq h_+$ and $r_-\leq h_-$, where $h_+$ and $h_-$ are the maximum possible numbers of beneficial and deleterious substitutions for $x$ and $h_++h_-\leq h=d(x,y)$.  Denoting by $h_0$ the maximum number of possible neutral substitutions out of $h$ different letters, the equations for these maximal numbers are
\begin{align}
  &h_++h_-+h_0=h\,, \label{eq:nr-sum}\\
  &h_+-h_-=n-k\,. \label{eq:nr-difference}
\end{align}
These equations are identical to~(\ref{eq:rm-sum}) and (\ref{eq:rm-difference}) if parent $y$ is considered as a mutated version of parent $x$ with $h=d(x,y)$ considered as the mutation radius.  As mentioned earlier, substitutions during crossover may occur also among the $l-d(x,y)=l-h$ identical letters (as in Example~\ref{ex:recombination}).  Such substitutions are neutral, and therefore it is possible that $r_0>h_0$.  In fact, the range of recombination radius is $r\in[0,l]$, and it can be larger than $h=d(x,y)$.  Changes between the parent and offspring strings are defined only by the distance $h=d(x,y)$, and we shall refer to it as \emph{recombination capacity}.

The range of recombination capacity is defined by the triangle inequalities:
\[
  |n-k|\leq h\leq n+k
\]
As with mutation radius, there are no potential neutral substitutions ($h_0=0$) at the extreme values $h=|n-k|$ or $h=n+k$.

\begin{proposition}[Dual recombination]
  If crossover recombination of $x\in S(\top,n)$ with $y\in S(\top,k)$ by exchanging $r\in[0,l]$ letters results in string $z\in S(\top,m)$, then the dual recombination $z'\in S(\top,m')$ is at distance $d(\top,z')=m'$:
  \[
    m'=n+k-m\,.
  \]
  \label{pr:dual}
\end{proposition}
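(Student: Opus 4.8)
The plan is to establish the identity $m+m'=n+k$ by a position-by-position accounting of Hamming distances, exploiting the complementary structure of an offspring and its dual. First I would fix the set $R\subseteq\{1,\dots,l\}$ of the $r$ positions at which letters are copied from $y$ into $x$ to form the offspring $z$, so that $z_i=y_i$ for $i\in R$ and $z_i=x_i$ for $i\notin R$. By the definition of the dual recombination $z'=\lambda x\oplus(1-\lambda)y$, the dual makes the opposite choice at every position: $z'_i=x_i$ for $i\in R$ and $z'_i=y_i$ for $i\notin R$. The crucial observation is therefore that at each single position $i$ the pair $(z_i,z'_i)$ is exactly the pair $(x_i,y_i)$ up to order --- one offspring inherits the letter of $x$ and the other inherits the letter of $y$.

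Next I would compare each coordinate with the corresponding letter $\top_i$ of the optimum, writing $a_i=1-\delta_{x_i\top_i}$ and $b_i=1-\delta_{y_i\top_i}$ for the elementary Hamming distances of the two parents from $\top$ at position $i$, so that $\sum_i a_i=n$ and $\sum_i b_i=k$. The contribution of position $i$ to $d(\top,z)$ equals $b_i$ when $i\in R$ and $a_i$ when $i\notin R$, while its contribution to $d(\top,z')$ equals $a_i$ when $i\in R$ and $b_i$ when $i\notin R$. In either case the two contributions sum to $a_i+b_i$, since swapping which offspring holds the $x$-letter and which holds the $y$-letter merely permutes the two terms. Summing over all $l$ positions then gives
\[
  d(\top,z)+d(\top,z')=\sum_{i=1}^l(a_i+b_i)=n+k\,,
\]
that is $m+m'=n+k$, which rearranges to the claimed $m'=n+k-m$.

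There is essentially no hard analytic step here; the only point requiring care is the bookkeeping in the definition of the dual, namely that $z$ and $z'$ partition the coordinate positions complementarily into the set $R$ (where each takes a letter from a different parent) and its complement. Once that complementary structure is fixed the identity is coordinate-wise and independent of $R$, of the recombination capacity $h=d(x,y)$, and indeed of the recombination radius $r$; this translation invariance is exactly what makes the sum $m+m'$ a conserved quantity and is the feature the paper later exploits when noting that crossover balances beneficial and deleterious outcomes. A secondary subtlety worth flagging is that the recombination potential $I(x,y,r)$ is a multiset, so several position sets $R$ may yield the same offspring $z$; the argument sidesteps this by fixing a representative $R$, and the resulting relation $m'=n+k-m$ holds for every such pair $(z,z')$.
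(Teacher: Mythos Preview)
Your proof is correct and takes a genuinely different, more elementary route than the paper's. The paper works within its established bookkeeping of beneficial, deleterious and neutral substitutions: denoting by $r_+',r_-'$ the counts for the dual, it uses $r_++r_+'=h_+$ and $r_-+r_-'=h_-$ together with the identities $r_+-r_-=n-m$, $h_+-h_-=n-k$ and $r_+'-r_-'=n-m'$ to deduce $n-m'=(n-k)-(n-m)=m-k$. Your argument bypasses this machinery entirely by observing that at each coordinate the multiset $\{z_i,z'_i\}$ equals $\{x_i,y_i\}$, so the per-position contributions to $d(\top,z)+d(\top,z')$ and to $d(\top,x)+d(\top,y)$ coincide. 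The paper's approach has the merit of reusing notation and relations already in place (and it dovetails with the symmetry argument used later for Proposition~\ref{pr:symmetry}); yours is self-contained, makes the conservation law $m+m'=n+k$ transparent without any reference to the $r_\pm,h_\pm$ decomposition, and shows directly that the identity is independent of $r$, $h$ and the particular choice of the crossover mask $R$.
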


See Appendix~\ref{sec:pr:dual} for the proof.  Intuitively, if $x$ receives $n-m$ letters $y_i=\top_i$ from $y$, then the Hamming distance reduces from $d(\top,x)=n$ to $d(\top,z)=n-(n-m)=m$.  At the same time, string $y$ receives $n-m$ letters $x_i\neq\top_i$ from $x$ (the dual recombination), which means that $d(\top,y)=k$ changes to $d(\top,z')=k+n-m$.

\subsection{Evolutionary dynamics under recombination}

Crossover recombination can be viewed as a transition from the pair $(x,y)\in\cH_\alpha^l\times\cH_\alpha^l$ of two parent strings to the pair $(z,z')\in\cH_\alpha^l\times\cH_\alpha^l$ of recombination $z$ and its dual $z'$.   The corresponding transition probability $P(z,z'\mid x,y)$ induces a transformation of the distribution $P_t(x,y)$ of the parent pairs into the distribution $P_{t+1}(z,z')$ of the offspring pairs:
\[
  P_{t+1}(z,z')=\sum_{(x,y)\in\cH_\alpha^l\times\cH_\alpha^l}P(z,z'\mid x,y)\,P_t(x,y)\,.
\]
Joint distributions $P_t(x,y)$ represent pairing probabilities of the parent strings for recombination, and they may depend on various properties such as fitness of individuals as well as their similarity.  Thus, generally $P_t(x,y)=P_t(y\mid x)P_t(x)$ and $P_t(y\mid x)\neq P_t(y)$.

The joint pairing distributions $P_t(x,y)$ induce the corresponding joint distributions of distances $n=d(\top,x)$ and $k=d(\top,y)$ of the paired strings:
\[
  P_t(n,k):=P_t\{x\in S(\top,n),y\in S(\top,k)\}\,.
\]
These joint distributions can be formed as products $P_t(k\mid n)P_t(n)$, where $P_t(n):=P_t\{x\in S(\top,n)\}$.  If strings $(x,y)$ are paired independently, then $P_t(k,n)=P_t(k)P_t(n)$.  However, generally $P_t(k\mid n)\neq P_t(k)$.

\begin{example}[Matching]
  If string $x\in S(\top,n)$ is paired with string $y\in S(\top,k)$ at equal distances $d(\top,x)=n=k=d(\top,y)$, then
  \[
    P(k\mid n)=\delta_n(k)\,,\qquad
    P_t(n,k)=
    \begin{cases}
      P_t(n) & \text{if $k=n$}\\
      0 & \text{otherwise}
    \end{cases}\,.
  \]
  This joint distribution may occur as an equilibrium solution to a minimax problem \cite{Neumann-Morgenstern}, when both parents minimize distances $d(\top,\cdot)$ of the strings they are recombined with (i.e. maximizing fitness of their partners).
  \label{ex:matching}
\end{example}

Apart from the distances from $\top$, each pair of strings $x,y\in\cH_\alpha^l$ is characterized also by their distance $h=d(x,y)$ (recombination capacity).  The joint distribution $P_t(n,k,h)=P_t(h\mid n,k)P_t(n,k)$ is also defined by the pairing distribution $P_t(x,y)$.

\begin{example}[Random pairing from a sphere]
  Consider string $x\in S(\top,n)$ paired with string $y\in S(\top,k)$ (i.e. distances $d(\top,x)=n$ and $d(\top,y)=k$ are fixed).  If string $y\in S(\top,k)$ is chosen uniformly at random from $S(\top,k)$, then conditional probability $P(h\mid n,k)$ is defined by the intersection of spheres $S(x,h)$ and $S(\top,k)$ (see Figure~\ref{fig:recombination}):
  \begin{align*}
    P(h\mid n,k)&=\frac{|S(x,h)\cap S(\top,k)|_{d(\top,x)=n}}{|S(\top,k)|}\\
                &=\frac{\sum\limits_{h_+=0}^h(\alpha-2)^{h_0}\binom{n-h_+}{h_0}(\alpha-1)^{h_-}\binom{l-n}{h_-}\binom{n}{h_+}}{(\alpha-1)^k\binom{l}{k}}\,.
  \end{align*}
  The latter formula is obtained using equation~(\ref{eq:intersection-sphere}) for intersection of spheres with distance $d(x,y)=h=h_++h_-+h_0$ treated as the mutation radius $r=r_++r_-+r_0$ and substituting $k$ for $m$.  The probability is zero if the triangle inequalities $|n-k|\leq h\leq n+k$ are not satisfied.
  \label{ex:pairing-h-random}
\end{example}

\begin{example}[Pairing at specific distance]
  One can try to pair strings choosing a specific value of the distance $d(x,y)=h$ between the parent strings (assuming that the current population has individuals satisfying this equality constraint).  The range of $h\in[0,l]$ is defined by the triangle inequalities: $|n-k|\leq h\leq n+k$ (and $n+k\leq l$).  For example, choosing the maximum value $h=\min\{n+k,l\}$ corresponds to the probability
  \[
    P(h\mid n,k)=\delta_{\min\{n+k,l\}}(h)\,.
  \]
  Minimization of $h$ corresponds to $\delta_{|n-k|}(h)$, and the average $h=\frac{1}{2}(|n-k|+n+k)=\max\{n,k\}$ to $\delta_{\max\{n,k\}}(h)$.
  \label{ex:pairing-h-specific}
\end{example}

Recombination of strings $(x,y)$ into $(z,z')$, where $z'$ denotes the dual recombination, results in a transformation of distances $n=d(\top,x)$ and $k=d(\top,y)$ into distances $m=d(\top,z)$ and $m'=d(\top,z')$.  The joint distributions of distance pairs $P_t(n,k):=P_t\{x\in S(\top,n),y\in S(\top,k)\}$ and $P_{t+1}(m,m'):=P_{t+1}\{z\in S(\top,m),z'\in S(\top,m')\}$ are transformed as follows:
\[
  P_{t+1}(m,m')=\sum_{n=0}^l\sum_{k=0}^l P(m,m'\mid n,k)\,P_t(n,k)\,,
\]
where $P(m,m'\mid n,k)$ is the transition probability between the pairs of spheres of radii $(n,k)$ and $(m,m')$:
\[
  P(m,m'\mid n,k):=P\{z\in S(\top,m),z'\in S(\top,m')\mid x\in S(\top,n),y\in S(\top,k)\}\,.
\]
The analysis is similar to mutation, but the transformations are now applied to joint distributions of distance pairs.  One can also obtain the transformation of distance distribution $P_t(n):=P_t\{x\in S(\top,n)\}$ into $P_{t+1}(m):=P_{t+1}\{z\in S(\top,m)\}$:
\[
  P_{t+1}(m)=\sum_{m'=0}^l\sum_{n=0}^l\sum_{k=0}^lP(m,m'\mid n,k)\,P(k\mid n)P_t(n)\,.
\]
Note that $m'=n+k-m$ for crossover recombination (Proposition~\ref{pr:dual}), so that
\[
  P(m,m'\mid n,k)=
  \begin{cases}
    P(m\mid n,k) & \text{if $m'=n+k-m$}\\
    0 & \text{otherwise}
  \end{cases}\,.
\]
Thus, the summation over $m'\in[0,l]$ above is not necessary, and it is sufficient to derive the expressions using only probability $P(m\mid n,k)$.

If the transition kernels $P(m\mid n,k)$ and $P(k\mid n)$ are time invariant, then the linear operator
\begin{equation}
  R(\cdot)=\sum_{n=0}^l\left[\sum_{k=0}^lP(m\mid n,k)\,P(k\mid n)\right]\,(\cdot)
  \label{eq:operator-recombination}
\end{equation}
acting on distributions $P_t(n)$ of distances $d(\top,x)=n\in[0,l]$ generates the entire evolution $\{P_t\}_{t\geq0}$ due to recombination as $P_{t+s}=R^sP_t$.  This can be used in simulations to analyse the effects of recombination and pairing strategies on evolution.

In Section~\ref{sec:mutation} on mutation, we expanded transition probability $P(m\mid n)$ over all values of the mutation radius $r\in[0,l]$ (\ref{eq:mutation-factorization}).  Similarly, here we expand the transition probability $P(m\mid n,k)$ over all values of the recombination radius $r\in[0,l]$ and recombination capacity $h=d(x,y)$:
\begin{equation}
  P(m\mid n,k)=\sum\limits_{r=0}^l\sum\limits_{h=0}^l P(m\mid n,k,h,r)\,\underbrace{P(r\mid n,k,h)}_{\text{Recombination}}\,\underbrace{P(h\mid n,k)}_{\text{Pairing}}\,.
  \label{eq:recombination-factorization}
\end{equation}
The probability $P(h\mid n,k)$ has been discussed in Examples~\ref{ex:pairing-h-random} and \ref{ex:pairing-h-specific}.  The probability $P(r\mid n,k,h)$ of recombination radius $r\in[0,l]$ can be determined from the analysis of the recombination operator.

\begin{example}[Uniform crossover]
In this form of recombination, letters $x_i$ and $y_i$ at each position $i\in\{1,\ldots,l\}$ in the parent strings are swapped with probability $\nu\in[0,1]$, called the \emph{recombination rate}, independently of letters $x_j$ and $y_j$ at other positions.  In this case, $P(r\mid n,k,h)$ is the binomial distribution:
\[
  P_\nu(r\mid n,k,h)=\binom{l}{r}\nu^r(n,k,h)[1-\nu(n,k,h)]^{l-r}\,.
\]
The rate $\nu$ may be different for different values of $n,k,h\in[0,l]$, so that the recombination operator depends on the recombination rate control function $\nu(n,k,h)$.
\label{ex:uniform-crossover}
\end{example}

\begin{example}[One point crossover]
In this form of recombination a single index $i\in\{1,\ldots,l\}$ is selected in the parent strings $x$ and $y$, and all letters $x_j$, $y_j$ with $j\geq i$ are swapped.  Thus, if $i=\lfloor l/2\rceil$, then approximately half of the letters are swapped, and the recombination radius is $r=\lfloor l/2\rceil$ (here $\lfloor\cdot\rceil$ denotes the nearest integer).  In this case, $P(r\mid n,k,h)$ is the Dirac distribution:
\[
  P(r\mid n,k,h)=\delta_{\lfloor l/2\rceil}(r)\,.
\]
Observe that the recombination radius in this case is equal to $r=l-i$, where $i$ is the index of one point crossover.  Potentially, one can define one-point crossover with variable index $i=l-r$, where the value $r=r(n,k,h)$ of the recombination radius may depend on distances $n$, $k$ and $h$.
\label{ex:one-point-crossover}
\end{example}

Using factorization~(\ref{eq:recombination-factorization}), the recombination operator~(\ref{eq:operator-recombination}) acting on distributions of distances from $\top$ takes the form
\[
  R(\cdot)=\sum_{n=0}^l\left[\sum_{k=0}^l\sum_{r=0}^l\sum_{h=0}^l P(m\mid n,k,h,r)\,P(r\mid n,k,h)\,P(h\mid n,k)P(k\mid n)\right]\,(\cdot)\,,
\]
where probabilities $P(k,h\mid n)=P(h\mid n,k)P(k\mid n)$ are defined by the pairing strategy (Examples~\ref{ex:matching}, \ref{ex:pairing-h-random}, \ref{ex:pairing-h-specific}) and $P(r\mid n,k,h)$ by the crossover process (Examples~\ref{ex:uniform-crossover}, \ref{ex:one-point-crossover}).  The unknown conditional probability $P(m\mid n,k,h,r)$ will be determined in the next section.

\subsection{Geometric probability of recombination onto a sphere}

Similar to geometric probability $P(m\mid n,r)$ defined in~(\ref{eq:p-mutation-def}) for mutation, the probability $P(m\mid n,k,h,r)$ in factorization~(\ref{eq:recombination-factorization}) represents a purely geometric problem depicted on Figure~\ref{fig:recombination}:
\begin{multline}
  P(m\mid n,k,h,r):=\\
  P\{z\in S(\top,m)\mid x\in S(\top,n),y\in S(\top,k)\cap S(x,h), r\in[0,l]\}
  \label{eq:p-recombination-def}
\end{multline}
Solution to this geometric problem requires counting the number of elements in certain subsets.  The sought offspring strings are in the intersection of recombination potential $I(x,y,r)$ and sphere $S(\top,m)$.  The difference $n-m$ of the radii of spheres $S(\top,n)$ and $S(\top,m)$ defines the difference $r_+-r_-$ of beneficial and deleterious substitutions into string $x$ from $y$ (\ref{eq:rr-difference}).  The upper bounds $r_+\leq h_+$ and $r_-\leq h_-$ can be defined from the conditions on the parent strings: $x\in S(\top,n)$ and $y\in S(\top,k)\cap S(x,h)$ (i.e. by the distances $n$, $k$ and $h$ for the triangle $(\top,x,y)$).  Indeed, $h_+\leq h=d(x,y)$ and $h_-=h_+-(n-k)$ by equation~(\ref{eq:nr-difference}), where $n=d(\top,x)$ and $k=d(\top,y)$.  It is convenient to count strings in the intersection $S(\top,m)\cap I(x,y,r)$ by grouping them based on the values $h_+\in[0,h]$ of possible beneficial substitutions.  We shall denote by $[h_+]$ the class of all strings $y$ that have $r_+\leq h_+$ beneficial substitutions into $x$.

\begin{lemma}[Intersection of sphere and recombination potential]
  Let $x\in S(\top,n)\subset\cH_\alpha^l$, and let $[h_+]$ be the class of all strings $y\in S(\top,k)\cap S(x,h)$ such that the number of beneficial substitutions from $y$ into $x$ (reducing the distance $d(\top,x)=n$) be at most $h_+\leq h$.  Then the number of elements in the intersection of $S(\top,m)$ with recombination potential $I(x,y,r)$ for $y\in[h_+]$ is
\begin{equation}
  \Bigl|S(\top,m)\cap I(x,y,r)\Bigr|_{y\in[h_+]}=\sum\limits_{r_+=0}^{h_+}\binom{l-h_+-h_-}{r-r_+-r_-}\binom{h_-}{r_-}\binom{h_+}{r_+}\,,
  \label{eq:intersection-interval-h+}
\end{equation}
where $h_-=h_+-(n-k)\geq0$, $r_-=r_+-(n-m)\geq0$ and $r_+\in[0,h_+]$, $r-r_+-r_-\geq0$.
\label{lm:recombination-h+}
\end{lemma}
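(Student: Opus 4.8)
The plan is to fix a representative string $y$ in the class $[h_+]$ and directly enumerate the recombinations in $I(x,y,r)$ that land on the sphere $S(\top,m)$. Since the resulting count will depend only on the structural parameters $h_+$, $h_-$, $h_0$ and not on the particular $y$, establishing it for one representative suffices for the whole class, which is what the subscript notation on the left-hand side of~(\ref{eq:intersection-interval-h+}) asserts.

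First I would partition the $l$ coordinate positions according to the three letters $x_i$, $y_i$ and $\top_i$ into four disjoint types: the set $B$ of \emph{beneficial} positions where $x_i\neq\top_i=y_i$, the set $D$ of \emph{deleterious} positions where $y_i\neq\top_i=x_i$, the set $N$ of \emph{neutral} positions where $x_i\neq y_i$ with both $x_i\neq\top_i$ and $y_i\neq\top_i$, and the set $E$ of \emph{identical} positions where $x_i=y_i$. By the definition of $[h_+]$ together with equations~(\ref{eq:nr-sum})--(\ref{eq:nr-difference}), one has $|B|=h_+$, $|D|=h_-=h_+-(n-k)$, $|N|=h_0=h-h_+-h_-$ and $|E|=l-h$. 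In particular the positions that are neither beneficial nor deleterious number $h_0+(l-h)=l-h_+-h_-$.

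Next I would observe that each element of the multiset $I(x,y,r)$ arises from selecting a subset $T$ of $r$ positions at which the letter of $x$ is replaced by that of $y$. Swapping a position in $B$ decreases $d(\top,\cdot)$ by one, swapping a position in $D$ increases it by one, and swapping a position in $N$ or $E$ leaves the distance unchanged (in the identical case because $x_i=y_i$, in the neutral case because $z_i=y_i\neq\top_i$ still disagrees with $\top_i$). Hence, writing $r_+=|T\cap B|$ and $r_-=|T\cap D|$, the offspring $z$ satisfies $d(\top,z)=n-r_++r_-$, so $z\in S(\top,m)$ if and only if $r_+-r_-=n-m$, that is $r_-=r_+-(n-m)$.

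Finally I would count the admissible subsets $T$ grouped by the value of $r_+$: there are $\binom{h_+}{r_+}$ choices of beneficial swaps, $\binom{h_-}{r_-}$ choices of deleterious swaps with $r_-=r_+-(n-m)$, and $\binom{l-h_+-h_-}{r-r_+-r_-}$ ways to complete the remaining $r-r_+-r_-$ swaps from the $l-h_+-h_-$ neutral-or-identical positions. Summing over $r_+\in[0,h_+]$ subject to $r_-\geq0$ and $r-r_+-r_-\geq0$ yields~(\ref{eq:intersection-interval-h+}). The one step I expect to be the main obstacle is recognizing that the neutral \emph{capacity} positions in $N$ and the identical positions in $E$ play exactly the same role with respect to distance and must therefore be pooled into the single factor $\binom{l-h_+-h_-}{\,\cdot\,}$; this pooling is precisely what permits the recombination radius $r$ to exceed the recombination capacity $h$.
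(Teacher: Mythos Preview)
Your proposal is correct and follows essentially the same counting argument as the paper: partition positions by their effect on $d(\top,\cdot)$, fix $r_+$, and multiply the three binomial factors. Your exposition is in fact more explicit than the paper's, which compresses the pooling of the neutral and identical positions into the single remark that ``the remaining $l-h_+-h_-$ letters can only make neutral substitutions''; your four-set partition $B,D,N,E$ and the observation that $|N|+|E|=l-h_+-h_-$ make this step transparent.
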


\begin{proof}
  Given the maximum numbers $h_+$ and $h_-=h_+-(n-k)$ of beneficial and deleterious substitutions (observe that $h_-$ depends on $n=d(\top,x)$ and $k=d(\top,y)$), the remaining $l-h_+-h_-$ letters can only make neutral substitutions.  Thus, for specific values of $r_+\leq h_+$, $r_-\leq h_-$, the total number of combinations is
  \[
    \binom{l-h_+-h_-}{r-r_+-r_-}\binom{h_-}{r_-}\binom{h_+}{r_+}\,,
  \]
  where $r_-=r_+-(n-m)$ from equation~(\ref{eq:rr-difference}), and $r-r_+-r_-\geq0$ is the total number of neutral recombinations from equation~(\ref{eq:rr-sum}).  Feasible values of $r_+\in[0,h_+]$ are determined from non-negativity of $r_-$ and $r-r_+-r_-$.  The total number is obtained by summing over all feasible values of $r_+\in[0,h_+]$.
\end{proof}

The maximum number of beneficial substitutions is bounded above $h_+\leq h=d(x,y)$, and adding the numbers in equation~(\ref{eq:intersection-interval-h+}) for all $h_+\in[0,h]$ would account for all strings in the intersection $S(\top,m)\cap I(x,y,r)$.  However, the classes $[h_+]$ are not distributed uniformly as they have different sizes.  Indeed, the totality of all strings $y$ to be recombined with $x\in S(\top,n)$ is the intersection of spheres $S(\top,k)$ and $S(x,h)$, as shown on Figure~\ref{fig:recombination}.  The number of strings in this intersection is given by formula~(\ref{eq:intersection-sphere}) in Lemma~\ref{lm:spheres}, where instead of mutation radius $r=r_0+r_-+r_0$ we have to use recombination capacity $h=h_++h_-+h_0$ and substituting $k$ for $m$.  In fact, the number of strings $y$ in the intersection $S(\top,k)\cap S(x,h)$ with specific values of the maximum numbers $h_+$, $h_-$ and $h_0$ is
\[
(\alpha-2)^{h_0}\binom{n-h_+}{h_0}(\alpha-1)^{h_-}\binom{l-n}{h_-}\binom{n}{h_+}\,.
\]
This can be used to derive probability~(\ref{eq:p-recombination-def}).

\begin{theorem}[Geometric probability of recombination onto a sphere]
  The probability $P(m\mid n,k,h,r)$ that crossover recombination of $r\in[0,l]$ letters in string $x\in S(\top,n)\subset\cH_\alpha^l$ with string $y\in S(\top,k)\cap S(x,h)$ results in string $z\in S(\top,m)$ is
  \begin{multline}
    P(m\mid n,k,h,r)=\\
    \frac{\sum\limits_{h_+=0}^h(\alpha-2)^{h_0}\binom{n-h_+}{h_0}(\alpha-1)^{h_-}\binom{l-n}{h_-}\binom{n}{h_+} \sum\limits_{r_+=0}^{h_+}\binom{l-h_+ - h_-}{r-r_+-r_-}\binom{h_-}{r_-}\binom{h_+}{r_+}}{\binom{l}{r}\sum\limits_{h_+=0}^h(\alpha-2)^{h_0}\binom{n-h_+}{h_0}(\alpha-1)^{h_-}\binom{l-n}{h_-}\binom{n}{h_+}}
    \label{eq:p-recombination}
  \end{multline}
  with $r_+\in[0,h_+]$, $h_+\in[0,h]$ and the numbers $h_-\geq0$, $h_0\geq0$, $r_-\geq0$, $r-r_+-r_-\geq0$ defined by the equations
  \[
    h_-=h_+-(n-k)\,,\qquad h_0=h-2h_++(n-k)\,,\qquad r_-=r_+-(n-m)\,.
  \]
  The probability is zero if the triangle inequalities $|n-k|\leq h\leq n+k$ or $|n-m|\leq \min\{r,h\}\leq n+m$ are not satisfied.
  \label{th:recombination}
\end{theorem}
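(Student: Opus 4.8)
The plan is to compute $P(m\mid n,k,h,r)$ as a ratio of counts under the uniform (insufficient-reason) convention already used for Theorem~\ref{th:mutation}, the genuinely new feature being that we must average over the choice of the second parent $y$. Fixing $x\in S(\top,n)$, I would draw $y$ uniformly from the admissible set $S(\top,k)\cap S(x,h)$ and then draw the offspring uniformly from the recombination potential $I(x,y,r)$, which has $|I(x,y,r)|=\binom{l}{r}$ elements. Writing the probability as the average over $y$ of the single-parent probabilities,
\[
  P(m\mid n,k,h,r)=\frac{1}{|S(\top,k)\cap S(x,h)|}\sum_{y\in S(\top,k)\cap S(x,h)}\frac{|S(\top,m)\cap I(x,y,r)|}{\binom{l}{r}}\,,
\]
reduces the theorem to evaluating the two intersection cardinalities appearing on the right.

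The crucial point — and where the argument departs from the mutation case — is that $|S(\top,m)\cap I(x,y,r)|$ is \emph{not} constant across admissible $y$: it depends on how many of the $h=d(x,y)$ positions at which $y$ differs from $x$ already carry the optimal letter, i.e. on the maximal number $h_+$ of beneficial substitutions. I would therefore partition $S(\top,k)\cap S(x,h)$ into the classes $[h_+]$, $h_+\in[0,h]$, with $h_-=h_+-(n-k)$ and $h_0=h-2h_++(n-k)$ fixed by~(\ref{eq:nr-sum}) and~(\ref{eq:nr-difference}). Lemma~\ref{lm:recombination-h+} supplies the inner count for any $y\in[h_+]$ and, decisively, shows it is a function of $h_+$ alone, equal to $\sum_{r_+=0}^{h_+}\binom{l-h_+-h_-}{r-r_+-r_-}\binom{h_-}{r_-}\binom{h_+}{r_+}$. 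The size of each class is the number of strings of $S(\top,k)\cap S(x,h)$ with those particular values $h_+,h_-,h_0$, namely $(\alpha-2)^{h_0}\binom{n-h_+}{h_0}(\alpha-1)^{h_-}\binom{l-n}{h_-}\binom{n}{h_+}$, which is exactly the $h_+$-th summand of Lemma~\ref{lm:spheres} (with $h$ playing the role of the mutation radius and $k$ that of $m$).

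Replacing the sum over individual $y$ by the sum over classes weighted by their sizes then turns the numerator into the double sum in~(\ref{eq:p-recombination}). For the denominator, $\binom{l}{r}$ factors out, and by Lemma~\ref{lm:spheres} the remaining sum $\sum_{h_+=0}^h(\alpha-2)^{h_0}\binom{n-h_+}{h_0}(\alpha-1)^{h_-}\binom{l-n}{h_-}\binom{n}{h_+}$ equals $|S(\top,k)\cap S(x,h)|$, reproducing the stated denominator and hence formula~(\ref{eq:p-recombination}). The boundary behaviour is automatic: the class sizes vanish unless $|n-k|\le h\le n+k$, which gives the first triangle condition, while the constraints $0\le r_+\le h_+$, $r_-=r_+-(n-m)\ge0$ and $r-r_+-r_-\ge0$ (together with $r_++r_-\le\min\{r,h\}$) cause every binomial coefficient in the numerator to vanish when the accompanying triangle-type conditions on $m$ are violated.

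The main obstacle is precisely the stratification step. One must verify that Lemma~\ref{lm:recombination-h+} renders the inner intersection count a function of $h_+$ only — independent of which representative $y\in[h_+]$ is chosen — since otherwise the average over $y$ would not collapse into a finite weighted sum over $h_+$ and the clean product form of the denominator would be lost. Establishing this compatibility between the two lemmas, i.e. that the parameter $h_+$ stratifying $S(\top,k)\cap S(x,h)$ in Lemma~\ref{lm:spheres} is exactly the parameter governing the recombination count in Lemma~\ref{lm:recombination-h+}, is the heart of the proof; the remaining manipulation is bookkeeping.
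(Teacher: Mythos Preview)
Your proposal is correct and follows essentially the same route as the paper: stratify the admissible second parents $y\in S(\top,k)\cap S(x,h)$ by the value of $h_+$, invoke Lemma~\ref{lm:recombination-h+} for the (constant-on-strata) inner count $|S(\top,m)\cap I(x,y,r)|$, and use the summands of Lemma~\ref{lm:spheres} for the class sizes, so that the average over $y$ collapses into the weighted sum over $h_+$ appearing in~(\ref{eq:p-recombination}). The paper phrases the same decomposition via the conditional probabilities $P(m\mid n,k,h,r,h_+)$ and $P(h_+\mid n,k,h)$ rather than as an explicit average over $y$, but the underlying argument and the ingredients used are identical.
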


\begin{proof}
  The probability $P(m\mid n,k,h,r)$ can be expressed as the following sum of products of conditional probabilities $P(m\mid n,k,h,r,h_+)$ and $P(h_+\mid n,h,k)$ for all values of $h_+\in[0,h]$:
  \begin{equation}
    P(m\mid n,k,h,r)=\sum\limits_{h_+=0}^h P(m \mid n,k,h,r,h_+)P(h_+\mid n,k,h)\,.
    \label{eq:p-recombination-marginal}
  \end{equation}
  Observe that the recombination radius $r$ does not occur in probability $P(h_+\mid n,h,k)$.  This is because the maximum number of possible beneficial recombinations $h_+\in[0,h]$ is defined solely by the parent strings, and it is independent of $r$.

  Here, $P(m\mid n,k,h,r,h_+)$ is the probability that an offspring $z$ in the recombination potential $I(x,y,r)$ is at distance $m=d(\top,z)$ from the optimum subject to the condition that the parent string $y$ has at most $h_+$ potential beneficial recombinations (fixing specific three points $(\top,x,y)$ in a Hamming space also fixes their distances $n=d(\top,x)$, $k=d(\top,y)$ and $h=d(x,y)$).  This probability is the ratio of strings in the intersection of sphere $S(\top,m)$ with potential $I(x,y,r)$ with the condition $y\in[h_+]$ over all offspring strings in $I(x,y,r)$:
  \[
    P(m\mid n,k,h,r,h_+)=\frac{|S(\top,m)\cap I(x,y,r)|_{y\in[h_+]}}{|I(x,y,r)|}\,.
  \]
  The number $|S(\top,m)\cap I(x,y,r)|_{y\in[h_+]}$ is given by equation~(\ref{eq:intersection-interval-h+}), and the number of elements in the potential $I(x,y,r)$ is $\binom{l}{r}$, so that
  \begin{equation}
    P(m\mid n,k,h,r,h_+)=\frac{\sum\limits_{r_+=0}^{h_+}\binom{l-h_+-h_-}{r-r_+-r_-}\binom{h_-}{r_-}\binom{h_+}{r_+}}{\binom{l}{r}}\,,
    \label{eq:p-intersection-h+}
  \end{equation}
  where $r_+-r_-=n-m$ with the constraints $r_-\geq0$, $r-r_+-r_-\geq0$ and $h_-=h_+-(n-k)\geq0$.
  
  Probability $P(h_+\mid n,k,h)$ is the ratio of parent strings $y$ with at most $h_+$ beneficial recombinations in the intersection $S(\top,k)\cap S(x,h)$ out of all parent strings in this intersection:
  \[
    P(h_+\mid n,k,h)=\frac{|S(\top,k)\cap S(x,h)|_{d(\top,x)=n,\,d(\top,y)=k,\,y\in[h_+]}}{|S(\top,k)\cap S(x,h)|_{d(\top,x)=n}}\,.
  \]
  Using equation~(\ref{eq:intersection-sphere}) for intersection of the spheres and making the described earlier substitutions this probability is
  \begin{equation}
    P(h_+\mid n,k,h)=\frac{(\alpha-2)^{h_0}\binom{n-h_+}{h_0}(\alpha-1)^{h_-}\binom{l-n}{h_-}\binom{n}{h_+}}{\sum\limits_{h_+=0}^h(\alpha-2)^{h_0}\binom{n-h_+}{h_0}(\alpha-1)^{h_-}\binom{l-n}{h_-}\binom{n}{h_+}}\,,
    \label{eq:p-h+}
  \end{equation}
  where $h_+-h_-=n-k$ and $h_++h_-+h_0=h$ with the constraints $h_-\geq0$, $h_0\geq0$.  The final formula~(\ref{eq:p-recombination}) is obtained by the substitution of equations~(\ref{eq:p-intersection-h+}) and (\ref{eq:p-h+}) into (\ref{eq:p-recombination-marginal}).
\end{proof}

\begin{remark}
  The summations in equation~(\ref{eq:p-recombination}) are shown across all $r_+\in[0,h_+]$ and $h_+\in[0,h]$, but it is important to check also that all other indices are non-negative: $r_-\geq0$, $r-r_+-r_-\geq0$, $h_-\geq0$ and $h_0\geq0$.  The triangle inequalities $|n-k|\leq h\leq n+k$ imply the following bounds $\max\{0,n-k\} \leq h_+\leq\frac12(h+n-k)\leq h$ (and similar for $r_+$), which can be used for a more efficient implementation.
  \label{rem:range-n}
\end{remark}

\begin{figure}[!th]
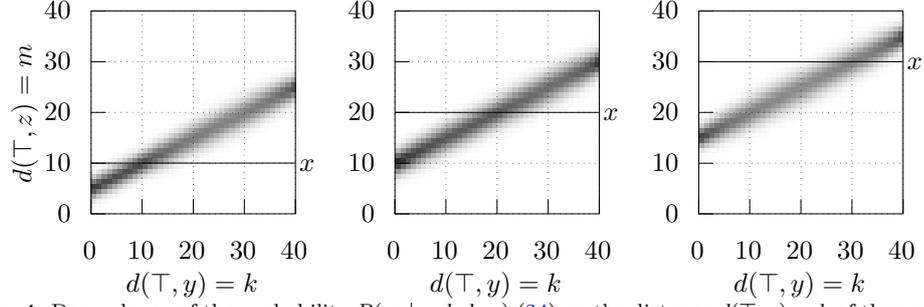

  \centering
  \input{p-recombine-k-10-40}
  \input{p-recombine-k-20-40}
  \input{p-recombine-k-30-40}
  \caption{Dependency of the probability $P(m\mid n, k, h, r)$ (\ref{eq:p-recombination}) on the distance $d(\top,y)=k$ of the second string (abscissae) in space $\cH_4^{40}$.  Ordinates show the resulting distance $d(\top,z)=m$ after crossover.  Three charts correspond to three distances $d(\top,x)=n\in\{10,20,30\}$ of the first parent string.  Grayscale represents probability $P\in[0,1]$.  Parameters $d(x,y)=h=\max\{n,k\}$ and $r=20$.}
  \label{fig:p-recombination-k}
\end{figure}

\begin{figure}[!ht]
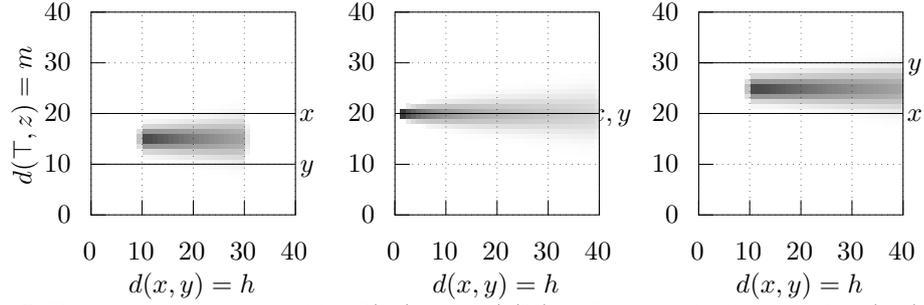

  \centering
  \input{p-recombine-h-kn-10-20-40}
  \input{p-recombine-h-kn+0-20-40}
  \input{p-recombine-h-kn+10-20-40}
  \caption{Dependency of the probability $P(m\mid n,k,h,r)$ (\ref{eq:p-recombination}) on the recombination capacity $d(x,y)=h$ (abscissae) in space $\cH_4^{40}$.  Ordinates show the resulting distance $d(\top,z)=m$ after crossover.  Three charts correspond to three distances $d(\top,y)=k\in\{10,20,30\}$. Parameters $d(\top,x)=n=20$ and $r=20$.  Grayscale represents probability $P\in[0,1]$.}
  \label{fig:p-recombination-h}
\end{figure}

\begin{figure}[!ht]
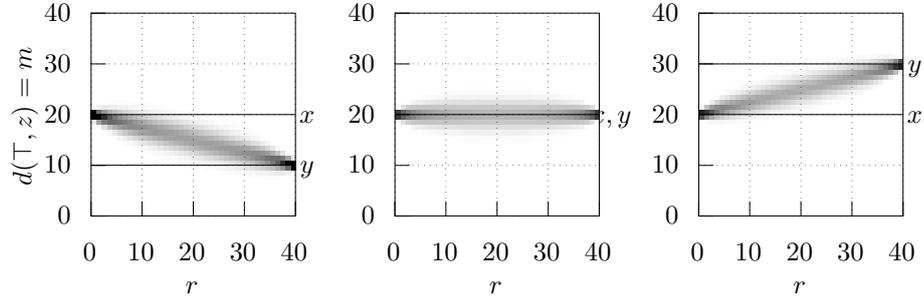

  \centering
  \input{p-recombine-r-kn-10-20-40}
  \input{p-recombine-r-kn+0-20-40}
  \input{p-recombine-r-kn+10-20-40}
  \caption{Dependency of the probability $P(m\mid n,k,h,r)$ (\ref{eq:p-recombination}) on the recombination radius $r\in[0,l]$ (abscissae) in space $\cH_4^{40}$.  Ordinates show the resulting distance $d(\top,z)=m$ after crossover.  Three charts are shown for three values of distance $d(\top,y)=k\in\{10,20,30\}$.  Parameters $d(\top,x)=n=20$ and $d(x,y)=h=\max\{n,k\}$.  Grayscale represents probability $P\in[0,1]$.}
  \label{fig:p-recombination-r}
\end{figure}

\begin{example}[Binary case $\alpha=2$]
  In the binary case there are no neutral substitutions among $h=d(x,y)$ different letters, and therefore $h_0=0$, $h=h_++h_-$ and $h_+-h_-=n-k$ define one possible value $h_+=\frac12(h+n-k)$.  Probability~(\ref{eq:p-h+}) becomes
  \[
    P(h_+\mid n,k,h)=\delta_{\frac12(h+n-k)}(h_+)\,.
  \]
  Substituting $h_+=\frac12(h+n-k)$ and using the conditions $r_+\in[0,h_+]$, $r_-=r_+-(n-m)\geq0$, $r_0=r-2r_++(n-m)\geq0$ formula~(\ref{eq:p-recombination}) reduces to
  \[
    P(m\mid n,k,h,r)=\frac{\sum\limits_{r_+=0}^{h_+}\binom{l-h}{r-2r_++(n-m)}\binom{h-h_+}{r_+-(n-m)}\binom{h_+}{r_+}}{\binom{l}{r}}\,.
  \]
  This formula is also valid for $\alpha>2$ when the distance $d(x,y)=h$ is minimized ($h=|n-k|$) or maximized ($h=n+k$), because there are $h_0=0$ possible neutral substitutions among $h=d(x,y)$ different letters in these cases.
  \label{ex:recombination-2}
\end{example}

Conditional probability~(\ref{eq:p-recombination}) was implemented in a digital computer using Common Lisp programming language, and Figures~\ref{fig:p-recombination-k}--\ref{fig:p-recombination-r} illustrate its dependency on parameters $n$, $k$, $h$ and $r$ in Hamming space $\cH_4^{40}$ ($\alpha=4$, $l=40$).  Ordinates on all charts show the resulting distance $m=d(\top,z)$ of offspring after crossover.  The grayscale represents different values of probability $P(m\mid n,k,h,r)$ with white corresponding to $P=0$ and black to $P=1$.

Figure~\ref{fig:p-recombination-k} shows the effect of distance $k=d(\top,y)$ of the second parent (abscissae) relative to the distance $n=d(\top,x)$ of the first parent shown on three charts for $n\in\{10,20,30\}$.  On all charts recombination radius was $r=\lfloor l/2\rceil=20$ and capacity $d(x,y)=h=\max\{n,k\}$.  One can see that the resulting distribution appears to have linear dependency on distance $d(\top,y)=k$, and (as expected) crossover with $d(\top,y)<d(\top,x)$ increases the chance of beneficial recombination.

Figure~\ref{fig:p-recombination-h} shows the effect of recombination capacity $h=d(x,y)$ (abscissae).  Three charts correspond to distances $d(\top,y)=k\in\{10,20,30\}$ and $d(\top,x)=n=20$.  Recombination radius was $r=\lfloor l/2\rceil=20$ on all charts.  One can see that recombination capacity increases the variance of the resulting distribution of $d(\top,z)=m$ with the maximum variance achieved for $h=n+k$.

Figure~\ref{fig:p-recombination-r} shows the effect of recombination radius $r$ (abscissae).  Three charts correspond to three distances $d(\top,y)=k\in\{10,20,30\}$ of the second parent and distance $d(\top,x)=n=20$.   Recombination capacity was $d(x,y)=h=\max\{n,k\}$ on all charts.  One can see that small recombination radius concentrates the probability at distance of the first parent, while increasing the radius concentrates the probability at distance of the second parent (as expected).  One can see also the variance of the offspring's distance $d(\top,z)=m$ appears to be maximized at $r=\lfloor l/2 \rceil=20$.

The above observations about the expected value and variance of distance $d(\top,z)=m$ after crossover are confirmed by the corresponding formulae below.

\begin{proposition}
  The expected value and variance of conditional probability distribution~(\ref{eq:p-recombination}) for Hamming distance $m=d(\top,z)$ of string $z$ obtained by a crossover recombination of $r\in[0,l]$ letters in string $x\in S(\top,n)$ from string $y\in S(\top,k)\cap S(x,h)$ in a Hamming space $\cH_\alpha^l$ are
  \begin{align}
    \bE_P\{m\mid n,k,h,r\}&=n+\frac{(k-n)}{l}r\,,\label{eq:e-recombination}\\
    \sigma_P^2\{m\mid n,k,h,r\}&=\left[h-\langle h_0\rangle - \frac{(n-k)^2}{l}\right]\frac{r(l-r)}{l(l-1)}\,.\label{eq:var-recombination}
  \end{align}
  Here $\langle h_0\rangle :=\bE\{h_0\mid n,k,h\}$ is the expected maximum number of neutral substitutions among $h=d(x,y)$ different letters, which is computed using formula~(\ref{eq:p-h+}) for conditional distribution $P(h_+\mid n,k,h)$ and using the relation $h_0 = h-2h_++n-k$.
  \label{pr:recombination}
\end{proposition}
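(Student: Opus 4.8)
The plan is to apply Lemma~\ref{lm:moments} with $\top$ in the role of the first string and the offspring $z$ in the role of the second, so that $m=d(\top,z)$ is exactly the Hamming distance whose mean and variance we want, paralleling the proof of Proposition~\ref{pr:mutation}. By~(\ref{eq:e-lp}) and~(\ref{eq:var-lp}) it then suffices to evaluate the two averaged probabilities $\langle P_i\rangle=\mathbb{P}\{z_i\neq\top_i\}$ and $\langle P_{ij}\rangle=\mathbb{P}\{z_i\neq\top_i\wedge z_j\neq\top_j\mid i\neq j\}$ under the distribution induced by crossover. Since distribution~(\ref{eq:p-recombination}) is the mixture~(\ref{eq:p-recombination-marginal}) over the class index $h_+$ weighted by $P(h_+\mid n,k,h)$ from~(\ref{eq:p-h+}), I would first condition on a fixed configuration (fixed $h_+$, hence fixed $h_-=h_+-(n-k)$ and $h_0=h-2h_++(n-k)$), compute the conditional moments, and average over $h_+$ only at the end.

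First I would classify the $l$ positions relative to $\top$ into five types: the $l-h$ positions with $x_i=y_i$ split into $a$ positions where $x_i=y_i=\top_i$ and $b=n-h_+-h_0$ positions where $x_i=y_i\neq\top_i$, while the $h$ differing positions split into $h_+$ beneficial ($x_i\neq\top_i=y_i$), $h_-$ deleterious ($\top_i=x_i\neq y_i$) and $h_0$ neutral ($x_i\neq\top_i$, $y_i\neq\top_i$) ones. In the recombination potential $I(x,y,r)$ exactly $r$ of the $l$ positions are swapped uniformly at random, so each position is swapped with marginal probability $r/l$. Writing $\xi_i$ for the indicator $z_i\neq\top_i$ and $s_i$ for the indicator that position $i$ is swapped, one has $\xi_i=0$ on the $a$ positions, $\xi_i=1$ on the $b$ and $h_0$ positions, $\xi_i=1-s_i$ on the $h_+$ positions and $\xi_i=s_i$ on the $h_-$ positions. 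Averaging expectations over $i$ gives $l\langle P_i\rangle=(b+h_0+h_+)-(h_+-h_-)\,r/l=n-(n-k)\,r/l$ by~(\ref{eq:nr-difference}); this is already independent of $h_+$, so the mixture average leaves it unchanged and~(\ref{eq:e-lp}) yields~(\ref{eq:e-recombination}).

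For the variance I would compute $\langle P_{ij}\rangle$ from the joint swap probabilities of a pair of distinct positions, which under sampling $r$ out of $l$ without replacement are the hypergeometric values $\mathbb{P}\{\text{both swapped}\}=r(r-1)/[l(l-1)]$, $\mathbb{P}\{i\text{ swapped},\,j\text{ not}\}=r(l-r)/[l(l-1)]$, and $\mathbb{P}\{\text{neither}\}=(l-r)(l-r-1)/[l(l-1)]$. Running the induced case analysis over ordered pairs of the five position-types (only the $h_+$ and $h_-$ types carry randomness) and substituting into~(\ref{eq:var-lp}) is the literal Lemma~\ref{lm:moments} route, and it is the main bookkeeping burden. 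A cleaner route that avoids the pairwise enumeration is to observe that $m=n+S_M-S_P$, where $S_P$ and $S_M$ count the swapped positions among the $h_+$ beneficial and $h_-$ deleterious positions; these are counts of a multivariate hypergeometric draw, so
\[
  \sigma_P^2\{m\mid h_+\}=\operatorname{Var}(S_M-S_P)=\frac{r(l-r)}{l^2(l-1)}\bigl[l(h_++h_-)-(h_+-h_-)^2\bigr],
\]
and using $h_++h_-=h-h_0$ together with $h_+-h_-=n-k$ this collapses to $\dfrac{r(l-r)}{l(l-1)}\bigl[(h-h_0)-(n-k)^2/l\bigr]$.

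The final step is the average over the mixture, and here the decisive simplification is that the conditional mean $\bE_P\{m\mid h_+\}=n+(k-n)r/l$ does not depend on $h_+$. By the law of total variance the between-component term $\sigma^2_{h_+}\{\bE_P\{m\mid h_+\}\}$ therefore vanishes, and $\sigma_P^2\{m\}$ equals the $P(h_+\mid n,k,h)$-average of the conditional variances above. Since $h_0=h-2h_++(n-k)$ is the only $h_+$-dependent quantity in that expression, the average merely replaces $h_0$ by $\langle h_0\rangle=\bE\{h_0\mid n,k,h\}$, delivering~(\ref{eq:var-recombination}). I expect the main obstacle to be precisely the combinatorial pair-by-pair evaluation of $\langle P_{ij}\rangle$ if one insists on the literal Lemma~\ref{lm:moments} computation; the multivariate-hypergeometric together with the total-variance argument sidesteps it, and the only genuinely delicate point there is verifying that the conditional mean is constant across the mixture so that the between-component variance drops out.
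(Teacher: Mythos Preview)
Your proposal is correct and takes a genuinely cleaner route than the paper. The paper proceeds exactly along what you call ``the literal Lemma~\ref{lm:moments} route'': it fixes $h_+$, splits the first event $z_i\neq\top_i$ into $2\times 3=6$ subcases (three sub-types of non-substitution among the $n$ letters $x_i\neq\top_i$, three sub-types of substitution by $y_i\neq\top_i$), then for each of these considers the two cases of the second event, yielding a twelve-term expression for $l(l-1)\langle P_{ij}\rangle$ that is simplified through several pages of algebra to the conditional variance $\bigl[h-h_0-(n-k)^2/l\bigr]\,r(l-r)/[l(l-1)]$, and finally averages over $h_+$ to replace $h_0$ by $\langle h_0\rangle$. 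Your observation that $m=n+S_M-S_P$ with $(S_P,S_M)$ the beneficial and deleterious swap-counts in a multivariate hypergeometric draw bypasses that entire enumeration: the known variance--covariance structure of hypergeometric marginals gives $\operatorname{Var}(S_M-S_P)$ in one line, and the law-of-total-variance step (with the key remark that the conditional mean is constant in $h_+$, so the between-component term vanishes) is exactly what the paper does implicitly when it averages only the $h_0$ term at the end. What your approach buys is brevity and transparency about \emph{why} the answer has the hypergeometric factor $r(l-r)/[l(l-1)]$; what the paper's approach buys is a self-contained derivation that does not presuppose the multivariate hypergeometric covariance formula and stays strictly within the framework of Lemma~\ref{lm:moments}.
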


The proof uses formulae~(\ref{eq:e-lp}) and (\ref{eq:var-lp}) from Lemma~\ref{lm:moments}, where probabilities $\langle P_i\rangle$ and $\langle P_{ij}\rangle$ are defined as functions of parameters $d(\top,x)=n$, $d(\top,y)=k$, $d(x,y)=h$ and recombination radius $r\in[0,l]$.  See Appendix~\ref{sec:proof-e-var-recombination} for details.

Formula~(\ref{eq:e-recombination}) confirms `linear' dependency that can be seen on Figure~\ref{fig:p-recombination-r}.  The slope of this dependency is defined by the difference $d(\top,y)-d(\top,x)=k-n$ of distances of the two parent strings.  Similarly, the variance depends only on the squared difference $(n-k)^2$ in equation~(\ref{eq:var-recombination}).  Thus, the slope and the variance of distance distribution after crossover are invariant under translation $n\mapsto n+m$ and $k\mapsto k+m$ of distances from the optimum (since $n+m-k-m=n-k$).

Formula~(\ref{eq:var-recombination}) shows also that the variance is maximized at equal distances $d(\top,x)=d(\top,y)$ (i.e. $(n-k)^2=0$) and when recombination capacity $d(x,y)=h$ is maximized: $h=n+k$ (in which case $\langle h_0\rangle=0$).  These effects of distances $d(\top,x)=n$, $d(\top,y)=k$ and capacity $d(x,y)=h$ on the variance are shown on Figure~\ref{fig:p-recombination-h-minmax}.

\begin{figure}[!t]
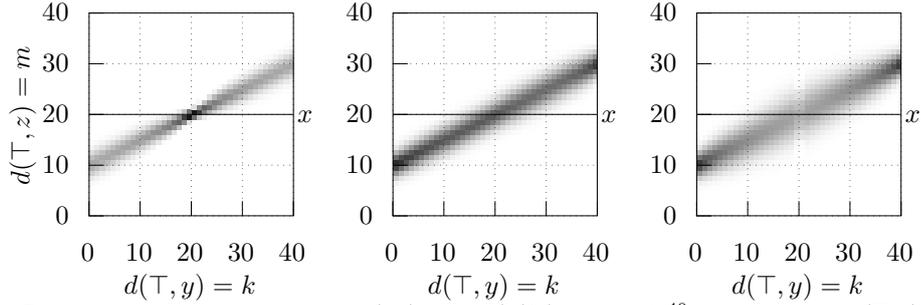

  \centering
  \input{p-recombine-k-minh-20-40}
  \input{p-recombine-k-20-40}
  \input{p-recombine-k-maxh-20-40}
  \caption{Dependency of the probability $P(m\mid n, k, h, r)$ (\ref{eq:p-recombination}) in space $\cH_4^{40}$ on the distance $d(\top,y)=k$ of the second string (abscissae) and using three strategies for recombination capacity $d(x,y)=h$: minimization $h=|n-k|$ (left), mean $h=\max\{n,k\}$ (centre), maximization $h=n+k\leq l$ (right).  Ordinates show the resulting distance $d(\top,z)=m$ after crossover.  Recombination radius $r=20$.  Grayscale represents probability $P\in[0,1]$.}
  \label{fig:p-recombination-h-minmax}
\end{figure}

The dependency of variance~(\ref{eq:var-recombination}) on the recombination radius $r\in[0,l]$ is particularly interesting, as this parameter is independent of the others.  Maximization gives the following result:
\[
  \frac{\partial}{\partial r}\sigma^2\{m\mid n,k,h,r\}=\left[h-\langle h_0\rangle - \frac{(n-k)^2}{l}\right]\frac{l-2\hat r}{l(l-1)}=0\qquad\implies\qquad
  \hat r=\frac{l}{2}\,,
\]
(the second derivative is negative).  Therefore, the variance of distances after crossover recombination is maximized when exactly half of the letters in the strings are recombined.  This corresponds to the one-point crossover at index $i=\lfloor l/2\rceil$.  For uniform crossover with rate $\nu=1/2$ recombination radius is random with the mean value $l/2$.

Finally, let us show and discuss the following symmetry property of beneficial and deleterious crossover recombinations.

\begin{proposition}
  Geometric probability~(\ref{eq:p-recombination}) has the following symmetry:
  \[
    P(m\mid n,k,h,r)=P(n+k-m\mid n,k,h,l-r)\,.
  \]
  \label{pr:symmetry}
\end{proposition}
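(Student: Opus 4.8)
The plan is to reduce the identity to the level of a single parent class $[h_+]$ and there exhibit a radius-complementing bijection between recombinations. First I would invoke the factorization~(\ref{eq:p-recombination-marginal}),
\[
P(m\mid n,k,h,r)=\sum_{h_+=0}^h P(m\mid n,k,h,r,h_+)\,P(h_+\mid n,k,h),
\]
and observe that the pairing weights $P(h_+\mid n,k,h)$ given by~(\ref{eq:p-h+}) depend only on the triangle $(n,k,h)$ and not on $r$ or $m$. Hence replacing $m\mapsto n+k-m$ and $r\mapsto l-r$ leaves every weight unchanged, so it suffices to prove the term-wise symmetry $P(m\mid n,k,h,r,h_+)=P(n+k-m\mid n,k,h,l-r,h_+)$ for each fixed $h_+\in[0,h]$.

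For a fixed admissible pair $(x,y)$ with the prescribed distances, I would identify a recombination using $r$ letters with the subset $A\subseteq\{1,\ldots,l\}$, $|A|=r$, of positions copied from $y$, yielding offspring $z_A$; the complementary subset $A^{c}$, $|A^{c}|=l-r$, yields the dual offspring $z_{A^{c}}$. Counting contributions position-by-position (exactly as in Proposition~\ref{pr:dual}) gives $d(\top,z_A)+d(\top,z_{A^{c}})=n+k$: positions where precisely one of $x_i,y_i$ differs from $\top_i$ contribute $1$ to the sum, positions where both differ contribute $2$, and these totals reassemble into $n+k$. Consequently $A\mapsto A^{c}$ is a bijection from $\{A:|A|=r,\;z_A\in S(\top,m)\}$ onto $\{B:|B|=l-r,\;z_B\in S(\top,n+k-m)\}$, whence $|S(\top,m)\cap I(x,y,r)|=|S(\top,n+k-m)\cap I(x,y,l-r)|$ as multisets of index sets. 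Dividing by $|I(x,y,r)|=\binom{l}{r}=\binom{l}{l-r}=|I(x,y,l-r)|$ yields the term-wise symmetry, and summing against the common weights closes the argument.

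If instead one prefers a purely algebraic check, the same identity drops out of the closed form~(\ref{eq:p-intersection-h+}) by the binomial reflection $\binom{a}{b}=\binom{a}{a-b}$ under the reindexing $r_+\mapsto h_+-r_+$ (equivalently $r_-\mapsto h_--r_-$): this sends $\binom{h_+}{r_+}$ and $\binom{h_-}{r_-}$ to their mirror images and the neutral factor $\binom{l-h_+-h_-}{\,r-r_+-r_-}$ to $\binom{l-h_+-h_-}{(l-r)-s_+-s_-}$, while the net-distance bookkeeping turns $r_+-r_-=n-m$ into $s_+-s_-=m-k=n-(n+k-m)$, exactly the constraint for the target distance $n+k-m$ at radius $l-r$. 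As $r_+$ runs over $[0,h_+]$ so does $s_+=h_+-r_+$, so the two sums agree term for term.

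The main obstacle is bookkeeping rather than depth: one must treat $I(x,y,r)$ as a multiset of index sets (so that distinct $A$ producing the same string are counted separately, matching the normalization $\binom{l}{r}$), and one must confirm that the feasibility ranges and triangle inequalities transform consistently so that the two sides vanish simultaneously when no admissible recombination exists. Both are handled automatically by the complement bijection, which is why I would present that route as primary and relegate the algebraic reflection to a confirmatory remark.
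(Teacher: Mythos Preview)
Your proposal is correct and essentially matches the paper's proof: the paper reduces via the factorization~(\ref{eq:p-recombination-marginal}), notes that $P(h_+\mid n,k,h)$ is independent of $m$ and $r$, and then establishes the term-wise identity for $P(m\mid n,k,h,r,h_+)$ exactly by the binomial reflection $\binom{a}{b}=\binom{a}{a-b}$ under the reindexing $r_+\mapsto h_+-r_+$, $r_-\mapsto h_--r_-$, invoking Proposition~\ref{pr:dual} for $m'=n+k-m$. Your ``secondary'' algebraic route is literally the paper's argument, and your ``primary'' complement bijection $A\mapsto A^{c}$ is simply its combinatorial meaning, so the two presentations are the same proof viewed from two angles.
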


See Appendix~\ref{sec:proof-symmetry} for the proof.  Notice that for strings at equal distances $d(\top,x)=d(\top,y)=n$ the probability that recombinations is beneficial $m=n-(n-m)<n$ is equal to the probability that the dual recombination is deleterious $m'=n+n-m>n$.  Thus, chances of beneficial and deleterious recombinations are in a certain sense equal, and this property is uniform across the entire Hamming space $\cH_\alpha^l$.  This is different from mutation, because beneficial mutations are less frequent than deleterious mutations for all strings with $d(\top,x)<l(1-1/\alpha)$.

\subsection{Maximization of probability of beneficial recombination}

The closed-form expression~(\ref{eq:p-recombination}) combined with probabilities of the recombination radius $P(r\mid n,k,h)$ and recombination capacity $P(h\mid n,k)$ gives complete solution to transition probability~(\ref{eq:recombination-factorization}) between pairs of spheres around optimum after crossover.  This makes it possible to maximize the probability of beneficial crossover recombination.  As with mutation, however, this optimization problem can be defined in many different ways (e.g. subject to a constraint on the number of generations).  Below we consider two simplified problems that have exact solutions.

\begin{proposition}[Minimization of the expected distance after crossover]
  The expected value~(\ref{eq:e-recombination}) of Hamming distance $d(\top,z)=m$ after crossover of $x\in S(\top,n)$ with $y\in S(\top,k)\cap S(x,h)$ into $z\in S(\top,m)$ is minimized if the recombination radius $r\in[0,l]$ has the values $r=0$ for $n<k$, $r=l$ for $n>k$, and any value for $n=k$.
  \label{pr:recombination-step}
\end{proposition}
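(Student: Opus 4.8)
The plan is to exploit the closed-form conditional expectation already established in Proposition~\ref{pr:recombination}, namely $\bE_P\{m\mid n,k,h,r\}=n+\frac{(k-n)}{l}\,r$. Since this is an affine function of the recombination radius $r$ and the feasible set is the interval $r\in[0,l]$, the minimizer over $r$ is determined entirely by the sign of the slope $\frac{k-n}{l}$, exactly as in the mutation case of Proposition~\ref{pr:mutation-step}. Thus the only work is a three-way case split on the sign of $k-n$, and no further combinatorics is needed.

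First I would note that the prefactor $1/l>0$, so the sign of the slope coincides with that of $k-n$. If $n<k$ the slope is positive, hence $\bE_P\{m\mid n,k,h,r\}$ is strictly increasing in $r$ and attains its minimum (equal to $n$) at the left endpoint $r=0$. If $n>k$ the slope is negative, so the expectation is strictly decreasing in $r$ and is minimized at the right endpoint $r=l$, with value $n+(k-n)=k$. If $n=k$ the slope vanishes, the expectation is identically $n$ on $[0,l]$, and every $r$ is a minimizer; this is the degenerate case recorded in the statement. Assembling these three cases gives precisely the claimed optimal radii $r=0$, $r=l$, and arbitrary $r$, respectively.

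I expect essentially no obstacle here, because all of the genuine difficulty is absorbed into the derivation of~(\ref{eq:e-recombination}) carried out in Appendix~\ref{sec:proof-e-var-recombination}; once that linear formula is in hand, the optimization is immediate. The one point worth a word of caution is that the recombination capacity $h=d(x,y)$ does not appear in~(\ref{eq:e-recombination}) at all, so the optimal radius is \emph{independent} of $h$, which is why the statement quantifies only over $r$. I would also remark that the endpoints $r=0$ and $r=l$ are genuinely admissible (no triangle-inequality constraint excludes them), so the stated minimizers are attainable and the result is exact rather than merely an infimum.
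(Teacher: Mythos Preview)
Your proposal is correct and follows essentially the same approach as the paper: both simply invoke the affine dependence of $\bE_P\{m\mid n,k,h,r\}$ on $r$ from equation~(\ref{eq:e-recombination}) and read off the endpoint minimizer according to the sign of $k-n$. The paper's proof is even terser (one sentence), but your case split and the remarks about $h$ not appearing and the endpoints being admissible are accurate elaborations of the same idea.
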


\begin{proof}
  The minimization $\bE_P\{m\mid n,k,h,r\}<n$ over the recombination radius $r\in[0,l]$ follows from equation~(\ref{eq:e-recombination}).
\end{proof}

This simple result implies the following recombination rate function for the uniform crossover operator (Example~\ref{ex:uniform-crossover}):
\[
  \hat\nu(n,k)=
  \begin{cases}
    0&\mbox{ if $n<k$}\\
    1/2&\mbox{ if $n=k$}\\
    1&\mbox{ if $n>k$}
  \end{cases}\,.
\]
The application of such a strategy for recombination can be limited if the population has no individuals with equal distances $d(\top,x)=d(\top,y)$.  Another approach is to maximize the probability of crossover recombination directly into optimum.  As for mutation, this problem has exact solution.

\begin{proposition}[Recombination into the optimum]
  The probability $P(m=0\mid n,k,h,r)$ that crossover recombination of string $x\in S(\top,n)\subset\cH_\alpha^l$ with $y\in S(\top,k)\cap S(x,h)$ results in string $z=\top$ is
  \[
    P(m=0\mid n,k,h,r)=\frac{|\{\top\in I(x,y,r)\}|}{|I(x,y,r)|}=
    \begin{cases}
      \frac{\binom{l-h}{r-n}}{\binom{l}{r}} & \text{if $h=n+k\leq l$, $n\leq r\leq l-k$}\\
      0 & \text{otherwise}
    \end{cases}
    \,,
  \]
  where $|\{\top \in I(x, y, r)\}|$ is the number of copies of element $\top$ in the recombination potential $I(x,y,r)$ (recall that it is a multiset); $h=d(x,y)$ is recombination capacity, and $r\in[0,l]$ is recombination radius.  The optimal recombination radius maximizing the above probability is
  \begin{equation}
    \hat r=\biggl\lfloor l\,\left(\frac{n}{n+k}\right)\biggr\rceil\,,\label{eq:r-optimal}
  \end{equation}
  where $n=d(\top,x)$ and $k=d(\top,y)$ (here $\lfloor\cdot\rceil$ denotes the nearest integer).
  \label{pr:recombination-optimum}
\end{proposition}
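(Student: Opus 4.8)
The plan is to first pin down exactly when the optimum can occur among the offspring, then count its multiplicity in the potential $I(x,y,r)$, and finally optimize over $r$. I would start by classifying the $l$ coordinates of the triangle $(\top,x,y)$ according to whether $x_i$ and $y_i$ agree with $\top_i$. Since every offspring coordinate $z_i$ is either the kept letter $x_i$ or the substituted letter $y_i$, the requirement $z=\top$ forces $z_i=\top_i$ at each position; consequently a position at which \emph{both} $x_i\neq\top_i$ and $y_i\neq\top_i$ can never be repaired, so $\top\notin I(x,y,r)$ for any $r$. Writing $n+k-h$ in terms of these coordinate classes shows it is a nonnegative combination (consistent with the triangle inequality $h\le n+k$) that vanishes precisely when there are no such ``doubly mismatched'' positions. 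Hence $h=n+k$ (which forces $n+k\le l$) is exactly the condition for $\top$ to be reachable, matching the case split in the statement.

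Assuming $h=n+k$, the coordinates split cleanly into three groups: exactly $n$ positions with $x_i\neq\top_i=y_i$ (which must be taken from $y$), exactly $k$ positions with $x_i=\top_i\neq y_i$ (which must be kept from $x$), and the remaining $l-h$ positions with $x_i=y_i=\top_i$ (where either choice is admissible). A recombination producing $\top$ must therefore substitute all $n$ positions of the first kind, keep all $k$ of the second, and distribute its remaining $r-n$ substitutions arbitrarily among the $l-h$ neutral positions. This gives $\binom{l-h}{r-n}$ copies of $\top$ in the multiset $I(x,y,r)$, subject to $0\le r-n\le l-h$, i.e. $n\le r\le l-k$; dividing by $|I(x,y,r)|=\binom{l}{r}$ yields the stated probability.

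For the optimization I would isolate the $r$-dependence using the elementary identity $\binom{l-h}{r-n}/\binom{l}{r}=\binom{r}{n}\binom{l-r}{k}/\bigl(\binom{l}{n}\binom{l-n}{k}\bigr)$, valid whenever $h=n+k$, so that maximizing the probability is equivalent to maximizing $g(r)=\binom{r}{n}\binom{l-r}{k}$. The ratio $g(r+1)/g(r)=\frac{r+1}{r+1-n}\cdot\frac{l-r-k}{l-r}$ is at least $1$ exactly when $(n+k)r\le nl-k$, placing the discrete mode near $(nl-k)/(n+k)$. Passing to the continuous (Stirling) approximation, differentiating $\ln\binom{r}{n}+\ln\binom{l-r}{k}$ and setting the derivative to zero gives the clean stationarity condition $\frac{r}{r-n}=\frac{l-r}{l-r-k}$, whose solution is $r=ln/(n+k)$; rounding to the nearest integer produces $\hat r$ in~(\ref{eq:r-optimal}).

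The hard part is the last step: reconciling the genuine integer maximizer with the nearest-integer formula. The ratio test locates the exact discrete mode at $\lceil (nl-k)/(n+k)\rceil$, with a two-point tie when $(nl-k)/(n+k)\in\bZ$, whereas the continuous optimum is $nl/(n+k)$, and these two quantities differ by $k/(n+k)\in(0,1)$. I would therefore need to check that rounding $nl/(n+k)$ to the nearest integer always lands on a true maximizer of $g$; this reduces to a careful comparison of fractional parts and is where the delicate bookkeeping lies, the remaining steps being a routine positional count and a single ratio computation.
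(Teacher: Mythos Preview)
Your derivation of the probability formula is correct and essentially matches the paper: both arguments classify positions by whether $x_i$ and $y_i$ agree with $\top_i$, observe that any ``doubly mismatched'' position forces $h_0>0$ and hence $h<n+k$, and then count the $\binom{l-h}{r-n}$ ways to place the remaining neutral substitutions. Your route is slightly more direct; the paper phrases the same count as a substitution of $r_+=h_+=n$, $r_-=0$, $h_-=k$, $h_0=0$ into the general formula~(\ref{eq:p-recombination}).

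For the optimization your approach genuinely differs. The paper differentiates $\binom{l-n-k}{r-n}/\binom{l}{r}$ directly using $\partial_r\binom{l}{r}=\binom{l}{r}(H_{l-r}-H_r)$ and the approximation $H_r\approx\ln r$, then solves for the root and checks the second derivative. You instead use the identity $\binom{l-h}{r-n}/\binom{l}{r}=\binom{r}{n}\binom{l-r}{k}/\bigl(\binom{l}{n}\binom{l-n}{k}\bigr)$ to reduce to maximizing $g(r)=\binom{r}{n}\binom{l-r}{k}$, for which the ratio test gives an exact discrete mode. This buys you more precision: you correctly locate the true integer maximizer near $\lceil(nl-k)/(n+k)\rceil$ and note that it need not coincide with $\lfloor nl/(n+k)\rceil$ (e.g.\ $n=1$, $k=3$, $l=10$ gives mode $2$ but $nl/(n+k)=2.5$). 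This is a real discrepancy, but be aware that the paper's own proof does not resolve it either---it, too, works with an approximation and simply rounds the continuous optimum. So your ``hard part'' is a valid caveat about the statement's level of rigor, not a defect of your argument relative to the paper's.
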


\begin{proof}
  Crossover recombination into $z=\top$, $d(\top,z)=m=0$, implies that the recombination radius is $r\geq n=d(\top,x)$ letters, of which there should be exactly $r_+=n$ beneficial substitutions, and which is also their maximum number $h_+=n$.  For deleterious substitutions $r_-=0$ and $h_-=h_+-(n-k)=k=d(\top,y)$ by equation~(\ref{eq:nr-difference}).  There can be no neutral substitutions among $h=d(x,y)$ different letters, which implies $h_0=0$ and $h=h_++h_-=n+k$.  Thus, recombination capacity $h=d(x,y)$ must be maximized (this also maximizes the variance of distances after crossover (\ref{eq:var-recombination})).  There can be $r_0=r-r_+-r_-=r-n\geq0$ neutral substitutions among $l-d(x,y)=l-h$ identical letters, which also must be identical to the corresponding letters in $\top$.  Because $r_0=r-n\leq l-h=l-n-k$, we also obtain the upper bound $r\leq l-k$.  The formula for the probability $P(m=0\mid n,k,h,r)$ is obtained by substituting the values $r_+=h_+=n$, $r_-=0$, $r_0=r-n$, $h_-=k$, $h=h_++h_-$ into formula~(\ref{eq:p-recombination}) and considering the constraints $h=d(x,y)=n+k\leq l$ and $n\leq r\leq l-k$.  This probability is the proportion of all optimal elements $\top$ in the recombination potential $I(x,y,r)$, which is a multiset (i.e. it may contain multiple copies of $\top$).

  The optimal recombination radius $r\in[n,l-k]$ maximizing the probability can be found by setting its derivative over $r$ to zero.  Using the following formula for the derivative of the binomial coefficient \cite{wolfram01:_binomial}:
  \[
    \frac{\partial}{\partial r}\binom{l}{r}=\binom{l}{r}[H_{l-r}-H_r]\,,
  \]
  where $H_r$ is the $r$th harmonic number, and employing simple approximation $H_r\approx\ln r$ gives the following necessary optimality condition:
  \begin{align*}
    \frac{\partial}{\partial r}\frac{\binom{l-n-k}{r-n}}{\binom{l}{r}}
    &=\frac{\binom{l-n-k}{\hat r-n}}{\binom{l}{\hat r}}\left[H_{l-k-\hat r}-H_{\hat r-n}-H_{l-\hat r}+H_{\hat r}\right]\\
    &\approx\frac{\binom{l-n-k}{\hat r-n}}{\binom{l}{\hat r}}\left[\ln\frac{(l-k-\hat r)\hat r}{(\hat r-n)(l-\hat r)}\right]=0\,.
  \end{align*}
  The root $\hat r$ of the above equation is obtained by setting the logarithm to zero resulting in the following equations:
  \[
    (l-k-\hat r)\hat r = (\hat r-n)(l-\hat r)\quad\implies\quad
    -k\hat r = -n(l - \hat r)\quad\implies\quad
    \hat r=l\left(\frac{n}{n+k}\right)\,.
  \]
  One can check also that the second derivative is negative for $\hat r$:
  \[
    \frac{\partial^2}{\partial r^2}\frac{\binom{l-n-k}{r-n}}{\binom{l}{r}}
    \approx\frac{\binom{l-n-k}{\hat r-n}}{\binom{l}{\hat r}}\left[\left(\ln\frac{(l-k-\hat r)\hat r}{(\hat r-n)(l-\hat r)}\right)^2 + \frac{\partial}{\partial r}\ln\frac{(l-k-\hat r)\hat r}{(\hat r-n)(l-\hat r)}\right]\,.
  \]
  The square of the logarithm on the left is zero at $\hat r$.  The derivative of the logarithm on the right is
  \begin{align*}
    \frac{\partial}{\partial r}\ln\frac{(l-k-\hat r)\hat r}{(\hat r-n)(l-\hat r)}
    &= \left.-\frac{1}{l-k-\hat r} + \frac{1}{l-\hat r} - \frac{1}{\hat r - n} + \frac{1}{\hat r}\right|_{\hat r = l\left(\frac{n}{n+k}\right)}\\
    &=(n+k)\left[\frac{1}{n}+\frac{1}{k}\right]\left[\frac{1}{l}-\frac{1}{l-(n+k)}\right]\leq0\,,
  \end{align*}
  because $l\geq l-(n+k)$.  Formula~(\ref{eq:r-optimal}) is the nearest integer of $l\left(\frac{n}{n+k}\right)$.
\end{proof}

The procedure for maximizing the probability $P(m=0\mid n,k,h,r)$ of crossover recombination into the optimum $\top$ can now be outlined:
\begin{enumerate}
\item Let $x\in S(\top,n)$ be the first parent string (i.e. the distance $d(\top,x)=n$ is fixed).
\item Choose the set $\{y\in S(\top,k)\}$ of second parents as close as possible to the optimum $\top$, because decreasing the distance $d(\top,y)=k$ increases the numerator $\binom{l-h}{r-n}$ in the probability $P(m=0\mid n,k,h,r)$.
\item Choose the second parent $y\in S(\top,k)$ with the maximum recombination capacity $d(x,y)=h=n+k\leq l$ (otherwise, the probability $P(m=0\mid n,k,h,r)$ is zero).
\item Recombine precisely $\hat r=\biggl\lfloor l\,\left(\frac{n}{n+k}\right)\biggr\rceil$ letters to maximize $P(m=0\mid n,k,h,r)$.
\end{enumerate}

One can see from formula~(\ref{eq:r-optimal}) that the optimal recombination radius increases with distance $d(\top,x)=n$ of the first string and decreases with distance $d(\top,y)=k$ of the second string.  Observe also that at equal distances $n=k$ the optimal value is $\hat r=l/2$, which also maximizes the distance variance~(\ref{eq:var-recombination}).

In the end of this section, let us compare two recombination operators --- the uniform (Example~\ref{ex:uniform-crossover}) and the one-point crossover (Example~\ref{ex:one-point-crossover}).  In the case of uniform crossover, the recombination radius is a binomial random variable, and taking into account the conditions $h=n+k$ and $r\in[n,l-k]$ we obtain
\[
  P_\nu(m=0\mid n,k) = \sum_{r=n}^{l-k}\binom{l-n-k }{r-n}\nu^r(n,k)[1-\nu(n,k)]^{l-r}\,.
\]
Optimization of the recombination rate is complicated due to the range of possible recombination radii $r\in[n,l-k]$.  For simplicity, let us assume that $r$ takes only one value $r\in[n,l-k]$.  In this case, the maximum is found by differentiation:
\begin{align*}
  \frac{\partial}{\partial\nu}P_\nu&=\binom{l-n-k }{r-n}\hat\nu^r[1-\hat\nu]^{l-r}
  \underbrace{\left(\frac{r}{\hat\nu} - \frac{l-r}{1-\hat\nu}\right)}_{=0}
  =0\qquad\implies\qquad
  \hat\nu=\frac{r}{l}\,,\\
  \frac{\partial^2}{\partial\nu^2}P_\nu&=\binom{l-n-k }{r-n}\hat\nu^r[1-\hat\nu]^{l-r}
   \left[\underbrace{\left(\frac{r}{\hat\nu} - \frac{l-r}{1-\hat\nu}\right)^2}_{=0} - \frac{l - 2r/\hat\nu + r/\hat\nu^2}{(1-\hat\nu)^2}\right] \\
                                   &=\binom{l-n-k }{r-n}\hat\nu^r[1-\hat\nu]^{l-r}
                                     \left[ \frac{l(1-l/r)}{(1-r/l)^2}\right]\leq 0\,,
\end{align*}
because $1-l/r\leq0$.  Substituting the optimal recombination radius~(\ref{eq:r-optimal}) results in the following recombination rate function:
\[
  \hat\nu(n,k)=\frac{n}{n+k}\,.
\]
Thus, uniform crossover with the above recombination rate makes the mean value $\bE\{r\}=l\nu$ of the recombination radius equal to the optimal value~(\ref{eq:r-optimal}).  However, its variance $\sigma^2(r)=l\nu(1-\nu)$ is generally not zero, meaning that the exact optimal value of recombination radius is not guaranteed.

In the case of one-point crossover, the distribution of recombination radius is the Dirac $\delta_{l-i}(r)$ with zero variance, where $i\in[1,\ldots,l]$ is the index of one-point crossover.  Therefore, if the index of one-point crossover is set to $i=l-\hat r=\lfloor lk/(n+k)\rceil$, then it is feasible to guarantee the optimal value given by equation~(\ref{eq:r-optimal}).  This could be the advantage of one-point crossover over the uniform crossover.  Interestingly, a process similar to one-point crossover is used in nature to exchange genetic material between pairs of homologous non-sister chromatids.

\section{Discussion}

We have analysed geometry and combinatorics of mutation and crossover operators in Hamming spaces.  The new formula for geometric probability~(\ref{eq:p-recombination}) of crossover recombination of two strings onto a sphere around an optimum now complements a similar formula~(\ref{eq:p-mutation}) for mutation that was derived previously in~\cite{Belavkin_etal11:_ecal11,Belavkin11:_itw11,Belavkin11:_dyninf,Belavkin11:_qbic11}.  Combined with the information about specific mutation and crossover operators (e.g. Examples~\ref{ex:point-mutation}, \ref{ex:uniform-crossover}, \ref{ex:one-point-crossover}) one can compute stochastic matrices $M$ and $R$ to represent Markov operators, defined by equations~(\ref{eq:operator-mutation}) and (\ref{eq:operator-recombination}), transforming distance distributions $p(t):=P_t\{x\in S(\top,n)\}$ under mutation and recombination in a Hamming space $\cH_\alpha^l$.  Their product together with the diagonal $(l+1)\times(l+1)$ matrix $S$ representing selection gives complete Markov evolution of distance distributions:
\[
  p(t)=(MRS)^t\,p(0)\,,\qquad p(0)=P_0\,.
\]
This opens up the possibility for computer simulations and numerical optimization of long-term evolutionary dynamics under various control functions and strategies, such as the variable mutation rates $\mu(n)$ (e.g. as in equation~(\ref{eq:linear-mutation-rate})), recombination radii $r(n,k)$ (e.g. as in equation~(\ref{eq:r-optimal})) and pairing strategies (e.g. Examples~\ref{ex:matching}, \ref{ex:pairing-h-random}, \ref{ex:pairing-h-specific}).  In some cases, analytic solutions are also possible, such as the optimality conditions given in Propositions~\ref{pr:mutation-step} and \ref{pr:mutation-optimum} for mutation (previously presented in \cite{Belavkin11:_itw11,Belavkin11:_dyninf,Belavkin11:_qbic11}) and in Propositions~\ref{pr:recombination-step} and \ref{pr:recombination-optimum} for crossover recombination.  It is important to note, however, that such solutions that are optimal for these specific criteria may not be optimal for other criteria (e.g. see the discussion in \cite{Buskulic-Doer19} or various optimality criteria and constraints in \cite{Belavkin11:_dyninf}).  Simulations using the above Markov process can be used for optimization of evolutionary dynamics over multiple generations and considering other characteristics, such as the rate of convergence or the running time.  The latter can be estimated as time to absorption for stochastic matrix $MRS$ by considering the optimum $\top\in\cH_\alpha^l$ as an absorbing state.  Such a programme has already been realized in \cite{Belavkin11:_dyninf}, but only for the mutation operator $M$.  This work extends the range of tools suitable for a more complete study.  Future work may consider potential applications to the run-time analysis of evolutionary algorithms and optimization of mutation and recombination operators in more complex fitness landscapes (i.e. when fitness is not the negative distance to optimum).  Such analysis can be facilitated by the formulae derived here and considering monotonic relations between fitness and distance that can often be postulated \cite{Belavkin_etal16:_jomb}.

Another interesting direction to explore is the interaction between mutation and crossover operators.  Our analysis suggests that mutation and crossover recombination may have different and in some sense complementary properties.  Mutation has the advantage that its range is the entire space $\{1,\ldots,\alpha\}^l$ of strings.  However, it lacks direction, and when strings are closer to an optimum the majority of mutations are deleterious.  Maximization of the probability of beneficial mutation requires that mutation rates decrease as strings evolve closer to a fitness peak, which has the inevitable effect of slowing down the evolution.  Recombination, on the other hand, acts in a subspace defined by the current population.  However, unlike mutation, recombination can have a direction towards higher fitness, and it equalizes the chances of beneficial and deleterious recombinations.  It has properties, such as variance of distance distribution, that are translation invariant.  These observations suggest that mutation can be more important for diversity and adaptation of the population that is far away from the fitness peak.  Once the population has evolved closer to the fitness peak and the mutation rate reduced, recombination may become more important to maintain the rate of adaptation.  These hypotheses can be tested using simulations.

It is important to emphasize that the theory and probability formulae presented here are exact and not approximate or asymptotic.  At the same time the model concerns a Markov process with only $l+1$ states (i.e. the range $\{0,\ldots,l\}$ of Hamming distance on $\cH_\alpha^l$) and square $(l+1)\times(l+1)$ matrices.  Even though $l$ can be large in some cases, this model is more computationally tractable than other approaches, such as \cite{Nix-Vose92,Vafaee_etal10} that used Markov processes with states corresponding to all possible variable size populations of strings.  Furthermore, many properties can be derived from simulations with small $l$.  In addition, the formulae for the first two moments (i.e. equations (\ref{eq:e-mutation}, \ref{eq:e-recombination}) for the means and  (\ref{eq:var-mutation}, \ref{eq:var-recombination}) for the variances) can be used to infer some approximate properties.  All formulae are valid for strings with arbitrary alphabet size $\alpha\in\bN$ broadening their scope of applications to different areas including biological systems.

Although the analysis presented here is based on information about Hamming distances between strings, the conclusions can be translated into more practical or biologically relevant notions of fitness and similarity.  Previously we showed that fitness is related to distance from an optimum at least in some neighbourhoods of a local optimum in a broad class of fitness landscapes \cite{Belavkin_etal16:_jomb}.  Theoretical predictions about optimal mutation rates were tested in computational experiments with transcription factor binding landscapes as well as experiments \emph{in vivo} with various microbes \cite{eids_nature14,Krasovec_etal17:_plos,Krasovec18:_isme}.  They discovered that mutation rates are strongly anticorrelated with population density, which microbes can sense and that is related to biological fitness (i.e. the replication rate).  Thus, organisms may use a control strategy of mutation parameters similar to that predicted by the theory in order to increase their adaptability.  This trait appears to exist across all domains of life \cite{Krasovec_etal17:_plos}.  It is reasonable to assume that similar strategies may exist for controlling parameters of recombination operators.  Testing these hypotheses experimentally is an exciting prospect.

\backmatter

\bmhead{Acknowledgements}
The author deeply acknowledges Alastair Channon, John Aston, Christopher Knight, Rok Kra{\v s}ovec, Elizabeth Aston and Danna Gifford for their contributions to previous joint work and collaborations on evolutionary dynamics of mutation and discussions of the new results on crossover recombination.  Anton Eremeev is acknowledged for discussing early drafts of this work in 2016.  Professors Satoru Miyazaki is deeply acknowledged for inviting and hosting the author at the Bioinformatics seminars in Tokyo University of Science between 2012--2024, where results of this work were presented and discussed on multiple occasions.  The author is grateful to Professor Keiko Sato for additional discussions of mathematical results.  The author deeply acknowledges very useful comments of three anonymous reviewers.

\section*{Declarations}


\subsection*{Funding}

This work was supported in part by \grant.








\begin{appendices}

\section{Proofs}

\subsection{Proof of Lemma~\ref{lm:moments} for the mean and variance of Hamming distance}\label{sec:proof-e-var}

\begin{proof}
  Using the definition of Hamming distance~(\ref{eq:hamming}) as a sum of elementary distances $1-\delta_{x_iy_i}\in\{0,1\}$ we have
  \begin{align*}
    \bE_P\{d(x,y)\}&=\bE_P\left\{\sum_{i=1}^l(1-\delta_{x_iy_i})\right\}=\sum_{i=1}^l\bE_P\{1-\delta_{x_iy_i}\}\,,\\
    \sigma_P^2\{d(x,y)\}&=\bE_P\left\{\left(\sum_{i=1}^l(1-\delta_{x_iy_i})\right)^2\right\}-\left(\bE_P\left\{\sum_{i=1}^l(1-\delta_{x_iy_i})\right\}\right)^2\\
                   &=\sum_{i=1}^l\bE_P\{(1-\delta_{x_iy_i})^2\}+\sum_{i=1}^l\sum_{\substack{j=1\\j\neq i}}^l\bE_P\{(1-\delta_{x_iy_i})(1-\delta_{x_jy_j})\} -\\
                   &\hspace{.5\textwidth} - \left(\sum_{i=1}^l\bE_P\{1-\delta_{x_iy_i}\}\right)^2\,,
  \end{align*}
  where we used additivity of the expected value and the square of the sum formula: $\left(\sum_ia_i\right)^2=\sum_ia_i^2 + 2\sum_{i<j}a_ia_j=\sum_ia_i^2 + \sum_i\sum_{j\neq i}a_ia_j$.   Noticing that $(1-\delta_{x_iy_i})^2=1-\delta_{x_iy_i}$ and denoting the average values of the expectations under summations by $\langle P_i\rangle$ and $\langle P_{ij}\rangle$ (see their definitions in Lemma~\ref{lm:moments}) we obtain the following formulae:
  \begin{align*}
    \bE_P\{d(x,y)\}&=l\langle P_i\rangle\,,\\
    \sigma_P^2\{d(x,y)\}&=l\langle P_i\rangle+l(l-1)\langle P_{ij}\rangle-(l\langle P_i\rangle)^2\,.
  \end{align*}
\end{proof}

The symbol $\langle P_i\rangle$ introduced above is the average (over all positions $i\in\{1,\ldots,l\}$) of the probability that $x_i\neq y_i$.  Indeed, the differences $1-\delta_{x_iy_i}$ are non-zero (and equal to one) only when $x_i\neq y_i$, and therefore the expectations $\bE_P\{1-\delta_{x_iy_i}\}$ are the probabilities that letters $x_i\neq y_i$ at positions $i\in\{1,\ldots,l\}$ (and these probabilities can be different for different $i$).

Similarly, $\langle P_{ij}\rangle$ is the average of joint probability that $x_i\neq y_i$ and $x_j\neq y_j$ at two different positions $i\neq j$ (there are $l(l-1)$ off-diagonal elements).  Indeed, the products $(1-\delta_{x_iy_i})(1-\delta_{x_jy_j})$ are non-zero only when both $x_i\neq y_i$ and $x_j\neq y_j$, and the expectations $\bE_P\{(1-\delta_{x_iy_i})(1-\delta_{x_jy_j})\}$ are the corresponding joint probabilities.

\subsection{Proof of Lemma~\ref{lm:spheres} for intersection of spheres in $\cH_\alpha^l$}\label{sec:proof-spheres}

\begin{proof}
  A substitution of letter $x_i$ at position $i\in\{1,\ldots,l\}$ may result in one of three possibilities:
\begin{itemize}
\item Letter $x_i\neq\top_i$ is substituted to $y_i=\top_i$ resulting in a beneficial substitution.  Such substitutions can only occur in $n=d(\top,x)$ letters $x_i\neq\top_i$.  Denoting by $r_+$ the total number of beneficial substitutions gives $\binom{n}{r_+}$ combinations.
\item Letter $x_i=\top_i$ is substituted to any of $\alpha-1$ letters $y_i\neq\top_i$ resulting in a deleterious substitution.  Such substitutions can only occur in $l-d(\top,x)=l-n$ letters $x_i=\top_i$.  Denoting by $r_-$ the total number of deleterious substitutions gives $(\alpha-1)^{r_-}\binom{l-n}{r_-}$ possibilities.
\item Letter $x_i\neq\top_i$ is substituted to one of the remaining $\alpha-2$ letters $y_i\neq\top_i$ resulting in a neutral substitution.  Such substitutions can only occur in $n=d(\top,x)$ letters $x_i\neq\top_i$ minus the number $r_+$ of beneficial substitutions.  Denoting by $r_0$ the total number of neutral substitutions gives $(\alpha-2)^{r_0}\binom{n-r_+}{r_0}$ possibilities.
\end{itemize}
For specific values of $r_+$, $r_-$ and $r_0$, the total number of combinations is
\[
  (\alpha-2)^{r_0}\binom{n-r_+}{r_0}(\alpha-1)^{r_-}\binom{l-n}{r_-}\binom{n}{r_+}\,.
\]
By equations~(\ref{eq:rm-sum}) and (\ref{eq:rm-difference}), the values of $r_-$ and $r_0$ are related to $r_+\in[0,r]$:
\begin{align*}
&r_-=r_+ -(n-m)\,,\\
&r_0=r-(r_++r_-)=r-2r_++(n-m)\,.
\end{align*}
Formula~(\ref{eq:intersection-sphere}) is obtained by summing over all feasible values of $r_+$ subject to constraints $r_-\geq0$ and $r_0\geq0$.
\end{proof}

\subsection{Proof of Proposition~\ref{pr:mutation}}\label{sec:proof-e-var-mutation}

\begin{proof}
  Here we use formulae~(\ref{eq:e-lp}) and (\ref{eq:var-lp}) with the following average probabilities:
  \begin{align*}
    \langle P_i\rangle&:=\mathbb{P}\{y_i\neq\top_i\mid x\in S(\top,n),y\in S(x,r)\}\,,\\
    \langle P_{ij}\rangle&:=\mathbb{P}\{y_i\neq\top_i\wedge y_j\neq\top_j\mid i\neq j,x\in S(\top,n),y\in S(x,r)\}\,,
  \end{align*}
  where the conditions are defined by the fact that string $y\in S(x,r)\subset\cH_\alpha^l$ is obtained from $x\in S(\top,n)$ by a substitution of $d(x,y)=r$ letters.  Indices $i$ and $j\in\{1,\ldots,l\}$ are positions of letters in the strings $x=(x_1,\ldots,x_l)$, $y=(y_1,\ldots,y_l)$ and $\top=(\top_1,\ldots,\top_l)$.  For each position $i\in\{1,\ldots,l\}$ there are three mutually exclusive possibilities for event $y_i\neq\top_i$:
  \begin{itemize}
  \item Letter $x_i\neq\top_i$ is not substituted, so that $y_i=x_i\neq\top_i$.  The number of letters $x_i\neq\top_i$ is $n=d(\top,x)$, and there are $l-r$ letters that remain not substituted in $y$, which means that the corresponding probability is $(n/l)(1-r/l)$.
  \item Letter $x_i\neq\top_i$ is substituted by $y_i\neq\top_i$, $y_i\neq x_i$.  There are $\alpha-1$ letters in the alphabet not equal to $\top_i$, and there are $\alpha-2$ remaining possibilities for $y_i\neq\top_i$.  Given that there are $n=d(\top,x)$ letters $x_i\neq\top_i$ and $r$ letters are substituted, the corresponding probability is $(n/l)[(\alpha-2)/(\alpha-1)](r/l)$.
  \item Letter $x_i=\top_i$ is substituted by any other letter $y_i\neq\top_i$.  The number of letters $x_i=\top_i$ is $l-n=l-d(\top,x)$, and there are $r$ letters that are substituted, which corresponds to the probability $(1-n/l)(r/l)$.
  \end{itemize}
  Adding the above probabilities for disjoint events gives the desired average probability:
  \[
    \langle P_i\rangle=\frac{n}{l}\left(1-\frac{r}{l}\right)+\frac{n}{l}\left(\frac{\alpha-2}{\alpha-1}\right)\frac{r}{l}+\left(1-\frac{n}{l}\right)\frac{r}{l}\,.
  \]
  Because $\langle P_i\rangle$ is the average probability across all positions $i\in\{1,\ldots,l\}$, its value is the same for all $i$, and the expected value $\bE_P\{m\mid n,r\}$ can be computed as $l\langle P_i\rangle$.  Its expression can be simplified as follows:
  \[
    l\langle P_i\rangle=n\left(1-\frac{1}{\alpha-1}\frac{r}{l}\right)+(l-n)\frac{r}{l}\,.
  \]
  The expression above can also be transformed into formula~(\ref{eq:e-mutation}).

  The average joint probability $\langle P_{ij}\rangle:=\mathbb{P}\{y_i\neq\top_i\wedge y_j\neq\top_j\mid i\neq j,n,r\}$ is factorized into the product $\mathbb{P}\{y_i\neq\top_i\mid n,r\}\mathbb{P}\{y_j\neq\top_j\mid y_i\neq\top_i,i\neq j,n,r\}$, where the first probability for the first event $z_i\neq\top_i$ is the average probability $\langle P_i\rangle$ derived above.  The second is the average conditional probability of the second event $z_j\neq\top_j$ for $j\neq i$ (and conditioned on the first event $z_i\neq\top_i$).  This conditional probability can be determined from the following considerations.

  For each of the three possibilities for $y_i\neq\top_i$ in the string of length $l$, there are three possibilities for $y_j\neq\top_j$ in the remaining string of length $l-1$.  Thus, there are $3\times 3=9$ joint events, the probabilities of which are defined similarly, but using numbers $n-1$ or $n$ and $r-1$ or $r$ depending on the type and position of the first event $y_i\neq\top_i$.  Thus, the product $l(l-1)\langle P_{ij}\rangle$ is as follows:
  \begin{align*}
    l(l-1)\langle P_{ij}\rangle=n\left(1-\frac{r}{l}\right)&\left[(n-1)\left(1-\frac{1}{\alpha-1}\frac{r}{l-1}\right)+(l-n)\frac{r}{l-1}\right]\\
                +n\left(\frac{\alpha-2}{\alpha-1}\right)\frac{r}{l}&\left[(n-1)\left(1-\frac{1}{\alpha-1}\frac{r-1}{l-1}\right)+(l-n)\frac{r-1}{l-1}\right]\\
                +(l-n)\frac{r}{l}&\left[n\left(1-\frac{1}{\alpha-1}\frac{r-1}{l-1}\right)+(l-1-n)\frac{r-1}{l-1}\right]\,.
  \end{align*}
  Formula~(\ref{eq:var-mutation}) for the variance is obtained by substituting the expressions for $l\langle P_i\rangle$ and $l(l-1)\langle P_{ij}\rangle$ into equation $l\langle P_i\rangle+l(l-1)\langle P_{ij}\rangle-(l\langle P_i\rangle)^2$.
\end{proof}

\subsection{Proof of Proposition~\ref{pr:dual}}\label{sec:pr:dual}

\begin{proof}
  Let us denote by $r'=l-r$ the recombination radius of the dual recombination $z'$, and let $r_+'$, $r_-'$ and $r_0'$ be respectively the numbers of beneficial, deleterious and neutral substitutions into $x$.  Because the dual recombination is a substitution of the remaining $l-r$ letters, we have
  \[
    r_++r_+'=h_+\,, \qquad r_-+r_-'=h_-\,.
  \]
  Using the above equations and equations~(\ref{eq:rr-difference}), (\ref{eq:nr-difference}) we have
  \begin{align*}
    r_+'-r_-'&=h_+-h_--(r_+-r_-)\\
            &=(n-k)-(n-m)\,.
  \end{align*}
  On the other hand, by analogy with equation~(\ref{eq:rr-difference}), we have
  \begin{equation}
    r_+'-r_-'=n-m'\,. \label{eq:rr-difference-2}
  \end{equation}
  Therefore, $n-k-(n-m)=m-k=n-m'$.
\end{proof}

\subsection{Proof of Proposition~\ref{pr:recombination}}\label{sec:proof-e-var-recombination}

\begin{proof}
  Here we use formulae~(\ref{eq:e-lp}) and (\ref{eq:var-lp}) with the following average probabilities
  \begin{align*}
    \langle P_i\rangle&:=\mathbb{P}\{z_i\neq\top_i\mid x\in S(\top,n),y\in S(\top,k),d(x,y)=h,r\}\,,\\
    \langle P_{ij}\rangle&:=\mathbb{P}\{z_i\neq\top_i\wedge z_j\neq\top_j\mid i\neq j,x\in S(\top,n),y\in S(\top,k),d(x,y)=h,r\}\,,
  \end{align*}
  where the conditions are defined by the fact that string $z\in S(\top,m)\subset\cH_\alpha^l$ is obtained from $x\in S(\top,n)$ by a substitution of $r\in[0,l]$ letters from string $y\in S(\top,k)\cap S(x,h)$.  Indices $i$ and $j\in\{1,\ldots,l\}$ are positions of letters in the strings.  For each position $i\in\{1,\ldots,l\}$ there are two mutually exclusive possibilities for $z_i\neq\top_i$:
  \begin{itemize}
  \item Letter $x_i\neq\top_i$ is not substituted by $y_i$, so that in the offspring $z_i=x_i\neq\top_i$.  The number of letters $x_i\neq\top_i$ is $n=d(\top,x)$, and there are $l-r$ letters that remain not substituted in $z$, which means that the corresponding probability is $(n/l)(1-r/l)$.
  \item Letter $x_i=\top_i$ is substituted by letter $y_i\neq\top_i$, so that in the offspring $z_i=y_i\neq\top_i$.  The number of letters $y_i\neq\top_i$ is $k=(\top,y)$, and there are $r$ letters that are substituted, which corresponds to probability $(k/l)(r/l)$.
  \end{itemize}
  Adding the above probabilities for disjoint events gives the desired average probability:
  \[
    \langle P_i\rangle=\frac{n}{l}\left(1-\frac{r}{l}\right)+\frac{k}{l}\frac{r}{l}\,.
  \]
  Because $\langle P_i\rangle$ is the average probability across all positions $i\in\{1,\ldots,l\}$, its value is the same for all $i$, and the expected value $\bE_P\{m\mid n,k,h,r\}$ can be computed as $l\langle P_i\rangle$:
  \[
    l\langle P_i\rangle=n\left(1-\frac{r}{l}\right)+k\frac{r}{l}\,.
  \]
  The above expression gives formula~(\ref{eq:e-recombination}).

  The average joint probability $\langle P_{ij}\rangle:=\mathbb{P}\{z_i\neq\top_i\wedge z_j\neq\top_j\mid i\neq j,n,k,h,r\}$ is factorized into the product $\mathbb{P}\{z_i\neq\top_i\mid n,k,h,r\}\mathbb{P}\{z_j\neq\top_j\mid z_i\neq\top_i,i\neq j,n,k,h,r\}$, where the first probability for the first event $z_i\neq\top_i$ is the average probability $\langle P_i\rangle$ derived above.  The second is the average conditional probability of the second event $z_j\neq\top_j$ for $j\neq i$ (and conditioned on the first event $z_i\neq\top_i$).  This conditional probability can be determined from the following considerations.

  Let us decompose each of the two cases of the first event $z_i\neq\top_i$ into three subcases resulting in $2\times 3 = 6$ mutually exclusive subcases of event $z_i\neq\top_i$.  These subcases were not considered for the probability $\langle P_i\rangle$, because it concerns only one index $i$.  When two indices $i$ and $j\neq i$ are considered for the joint probability $\langle P_{ij}\rangle$, these subcases are important, because they influence the numbers that are required for the probability of the second event $z_j\neq\top_j$.

  First, let us consider when any of $n$ letters $x_i\neq\top_i$ are not substituted by $y_i$.  There are three subcases for such non-substitutions.  They can be among
  \begin{enumerate}
  \item $h_+$ letters $y_i=\top_i$ (i.e. some of $h_+$ possible beneficial substitutions do not occur).
  \item $h_0$ letters $y_i\neq\top_i$, $y_i\neq x_i$ (i.e. some of $h_0$ neutral substitutions do not occur).
  \item $n-h_0-h_+$ identical letters $y_i=x_i\neq\top_i$.
  \end{enumerate}
  If a non-substitution occurs at position $i$, then the number $n$ reduces to $n-1$, but the recombination radius $r$ remains the same.  The number $k$ of letters $y_i\neq\top_i$ remains the same in the first subcase (because $x_i\neq\top_i$ was not substituted by one of $l-k$ letters $y_i=\top_i$), but it reduces to $k-1$ in the last two subcases.  The length $l$ of the remaining string is $l-1$.
  
  Second, let us consider when letters $x_i$ are substituted by any of $k$ letters $y_i\neq\top_i$.  Again, there are three subcases for such substitutions.  They can be among
  \begin{enumerate}
  \item $h_-$ letters $y_i\neq\top_i$ (i.e. some of $h_-$ possible deleterious substitutions occur).
  \item $h_0$ letters $y_i\neq\top_i$, $y_i\neq x_i$ (i.e. some of $h_0$ neutral substitutions occur).
  \item $k-h_0-h_-$ identical letters $y_i=x_i\neq\top_i$.
  \end{enumerate}
  Note that $k-h_0-h_-=n-h_0-h_+$, which follows from $h_+-h_-=n-k$.  If a substitution by letter $y_i\neq\top_i$ occurs at position $i$, then the number $k$ reduces to $k-1$, and the recombination radius $r$ reduces to $r-1$.  The number $n$ of letters $x_i\neq\top_i$ remains the same in the first subcase (because one of $l-n$ letters $x_i=\top_i$ was substituted by $y_i\neq\top_i$), but it reduces to $n-1$ in the last two subcases.  The length $l$ of the remaining string is $l-1$.

  For each of the six subcases, there are two possibilities for the second event $z_j\neq\top_j$ in the remaining string of length $l-1$, so that there are $6\times 2=12$ joint events.  Note that the values $h_+$, $h_0$ and $h_-$ are generally random and related by equations~(\ref{eq:nr-sum}) and (\ref{eq:nr-difference}).  Thus, initially we derive the formula for the product $l(l-1)\langle P_{ij}\rangle$ assuming that specific values of $h_+$, $h_0$ and $h_-$ have been fixed:
  \begin{align*}
    l(l-1)\langle P_{ij}\rangle=\left(1-\frac{r}{l}\right)\Biggl\{h_+&\left[(n-1)\left(1-\frac{r}{l-1}\right)+k\frac{r}{l-1}\right]\\
    +h_0&\left[(n-1)\left(1-\frac{r}{l-1}\right)+(k-1)\frac{r}{l-1}\right]\\
    +(n-h_0-h_+)&\left[(n-1)\left(1-\frac{r}{l-1}\right)+(k-1)\frac{r}{l-1}\right]\Biggr\}\\
    +\frac{r}{l}\Biggl\{h_-&\left[n\left(1-\frac{r-1}{l-1}\right)+(k-1)\frac{r-1}{l-1}\right]\\
    +h_0&\left[(n-1)\left(1-\frac{r-1}{l-1}\right)+(k-1)\frac{r-1}{l-1}\right]\\
    +(k-h_0-h_-)&\left[(n-1)\left(1-\frac{r-1}{l-1}\right)+(k-1)\frac{r-1}{l-1}\right]\Biggr\}
  \end{align*}
  The formula above contains six lines corresponding to six subcases of the first event $z_i\neq\top_i$: three non-substitutions of $x_i\neq\top_i$ and three substitutions of $x_i=\top_i$.  Expressions in square brackets on each line correspond to two cases of the second event $z_j\neq\top_j$, $j\neq i$: non-substitutions of $x_j\neq\top_j$ and substitutions of $x_j=\top_j$.  Factoring and noticing that $h_++h_0+n-h_0-h_+=n$ and $h_-+h_0+k-h_0-h_-=k$ we obtain:
  \begin{align*}
    l(l-1)\langle P_{ij}\rangle=\left(1-\frac{r}{l}\right)\Biggl\{n&\left[(n-1)\left(1-\frac{r}{l-1}\right)+(k-1)\frac{r}{l-1}\right]+h_+\frac{r}{l-1}\Biggr\}\\
    +\frac{r}{l}\Biggl\{k&\left[(n-1)\left(1-\frac{r-1}{l-1}\right)+(k-1)\frac{r-1}{l-1}\right] + h_-\left(1-\frac{r-1}{l-1}\right)\Biggr\}
  \end{align*}
  The right-hand-side of the above equation can be written as
  \begin{align*}
    \left(1-\frac{r}{l}\right)\Biggl\{n&\left[(n-1)\left(1-\frac{r-1}{l-1}\right)+(k-1)\frac{r-1}{l-1}\right]+n\left(\frac{k-1}{l-1}-\frac{n-1}{l-1}\right)+h_+\frac{r}{l-1}\Biggr\}\\
    +\frac{r}{l}\Biggl\{k&\left[(n-1)\left(1-\frac{r-1}{l-1}\right)+(k-1)\frac{r-1}{l-1}\right] + h_-\left(1-\frac{r-1}{l-1}\right)\Biggr\}\,,
  \end{align*}
  which allows us to factor the members as follows:
  \begin{multline*}
    \left[\left(1-\frac{r}{l}\right)n+\frac{r}{l}k\right]\left[\left(1-\frac{r-1}{l-1}\right)(n-1)+\frac{r-1}{l-1}(k-1)\right]+\\
     +\left(1-\frac{r}{l}\right)\left(\frac{n}{l-1}(k-n) + h_+\frac{r}{l-1}\right) + \frac{r}{l}h_-\left(1-\frac{r-1}{l-1}\right)\,.
  \end{multline*}
  Substituting $h_-=h_++(k-n)$ the expression for $l(l-1)\langle P_{ij}\rangle$ becomes
  \begin{multline*}
    \left[\left(1-\frac{r}{l}\right)n+\frac{r}{l}k\right]\left[\left(1-\frac{r-1}{l-1}\right)(n-1)+\frac{r-1}{l-1}(k-1)\right]+\\
    +\left(1-\frac{r}{l}\right) \frac{(n+r)(k-n)+2rh_+}{l-1}\,.
  \end{multline*}
  It can now be combined with the expression for $l\langle P_i\rangle$ in equation $l\langle P_i\rangle+l(l-1)\langle P_{ij}\rangle-(l\langle P_i\rangle)^2$ to derive the variance formula with fixed $h_+$:
  \begin{align*}
    \sigma_P^2\{m\mid n,k,h,r,h_+\}&=\frac{l2h_+-k^2+lk+2nk-ln-n^2}{l^2(l-1)}r(l-r)\\
                                 &=\frac{l[2h_+ - (n - k)] - k^2 + 2nk - n^2}{l^2(l-1)}r(l-r)\\
                                 &=\left[2h_+ - (n - k) - \frac{(n - k)^2}{l}\right]\frac{r(l-r)}{l(l-1)}\\
                                 &=\left[h-h_0 - \frac{(n-k)^2}{l}\right]\frac{r(l-r)}{l(l-1)}\,.
  \end{align*}
  Formula~(\ref{eq:var-recombination}) for the variance is obtained by averaging over all possible values of $h_+$ or $h_0=h-2h_++n-k$, which means that they are replaced by their expected values $\langle h_+\rangle$ or $\langle h_0\rangle:=\bE\{h_0\mid n,k,h\}$ with respect to $P(h_+\mid n,k,h)$~(\ref{eq:p-h+}).
\end{proof}

\subsection{Proof of Proposition~\ref{pr:symmetry}}\label{sec:proof-symmetry}

\begin{proof}
  Looking at factorization~(\ref{eq:p-recombination-marginal}) of probability $P(m\mid n,k,h,r)$, one can see that probability $P(h_+\mid n,k,h)$ does not influence the result, because it does not include variables $m$ and $r$.  Thus, we only need to consider probability $P(m\mid n,k,h,r,h_+)$ given by equation~(\ref{eq:p-intersection-h+}).  Using symmetry of binomial coefficients $\binom{l}{r}=\binom{l}{l-r}$, one can see that
  \[
    \binom{l-h_+-h_-}{r-r_+-r_-}\binom{h_-}{r_-}\binom{h_+}{r_+}=
    \binom{l-h_+-h_-}{r'-r'_+-r'_-}\binom{h_-}{r'_-}\binom{h_+}{r'_+}\,,
  \]
  where $r'=l-r$, $r'_+=h_+-r_+$ and $r'_-=h_--r_-$.  Note that $r'=l-r$ is recombination radius of the dual recombination $z'=(1-\lambda)y\oplus\lambda x$, and $r'=r'_++r'_-+r'_0$.  Therefore, $P(m\mid n,k,h,r,h_+)=P(m'\mid n,k,h,l-r,h_+)$, where $m'=n+k-m$ by Proposition~\ref{pr:dual}.
\end{proof}

\end{appendices}

\end{document}